\newtheorem{observation}{Observation}
\newtheorem{idea}{Idea}
\title{\texttt{CelticGraph}: Drawing Graphs as Celtic Knots and Links}
\author{
Peter Eades\inst{1} \and 
Niklas Gröne\inst{2} \and 
Karsten Klein\inst{2}\orcidID{0000-0002-8345-5806} \and 
Patrick Eades\inst{5} \and 
Leo Schreiber\inst{4} \and
Ulf Hailer\inst{2} \and
Falk Schreiber\inst{2,3} }
\authorrunning{Peter Eades et al.}
\institute{
University of Sydney, Australia \and 
University of Konstanz, Germany \and 
Monash University, Australia \and 
Independent Scholar, Germany \and
Melbourne University, Australia}
\begin{document}

\maketitle

\begin{figure}
  \centering
  \vspace*{-1ex}
  \includegraphics[width=0.85\textwidth]{../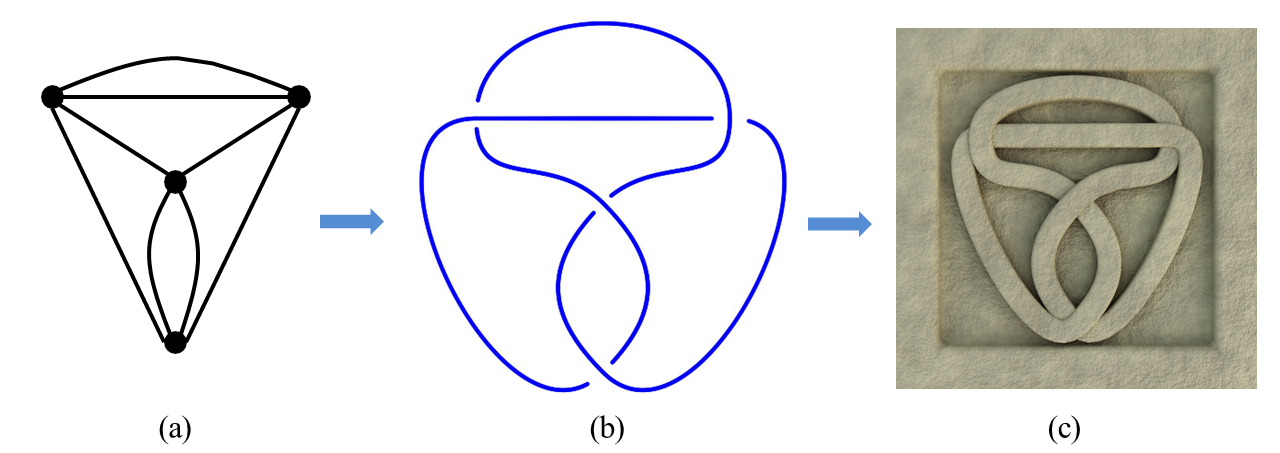}
  \caption{\emph{Using the {\tt CelticGraph} framework, we take the graph $K'_4$ (a) ($K_4$ with some duplicate edges), and create a knot drawing of $K'_4$ (b); from this we render the graph as a Celtic Knot with a sandstone texture (c).}
  }
  \label{fi:teaserV2}
  \vspace*{-2ex}
\end{figure}

\begin{abstract}
    Celtic knots are an ancient art form often attributed to Celtic cultures, used to decorate monuments and manuscripts, and to symbolise eternity and interconnectedness.
    This paper describes the framework \texttt{CelticGraph} to draw graphs as Celtic knots and links.
    The drawing process raises interesting combinatorial concepts in the theory of circuits in planar graphs. Further, \texttt{CelticGraph} uses a novel algorithm to represent edges as B\'ezier curves, aiming to show each link as a smooth curve with limited curvature.
    \begin{keywords}
        Celtic Art, Knot Theory, Interactive Interfaces
    \end{keywords}
\end{abstract}

\section{Introduction}


Celtic knots are an ancient art form often attributed to Celtic cultures. These elaborate designs (also called ``endless knots'') were used to decorate monuments and manuscripts, and they were often used to symbolise eternity and interconnectedness. Celtic knots are a well-known visual representation made up of a variety of interlaced knots, lines, and stylised graphical representations. The patterns often form continuous loops with no beginning or end (knot) or a set of such loops (links). In this paper we will use the Celtic knot visualisation metaphor to represent specific graphs in the form of ``knot diagrams''. 

We show how to draw a 4-regular planar graph\footnote{Graphs used in this paper can contain multiple edges and loops (also called pseudographs or multigraphs).}  as a knot/link diagram. This involves constructing certain circuits\footnote{For the formal definition of a circuit see Sect.~\ref{sec:c}} in the 4-regular planar graph. Further, we show how to route graph edges so that the underlying links are aesthetically pleasing. This involves some optimisation problems for cubic B\'ezier curves. We also provide an implementation of the presented methods as an add-on for  \texttt{Vanted}~\cite{JunkerKS06b}. This  system allows to transform a graph into a knot (link) representation and to interactively change the layout of both, the graph as well as the knot. In addition, knots can be exported to the 3D renderer Blender~\cite{Blender} to allow for artistic 3D renderings of the knot. Figure~\ref{fi:teaserV2} shows a 4-regular planar graph, its knot representation and a rendering of the knot.


\section{Background}

This paper has its roots in three disciplines: Mathematical knot theory, Celtic cultural history, and graph drawing. We briefly review the relevant parts of these diverse fields in Sect.~\ref{se:knots}, \ref{se:celtic}, and \ref{se:gd}. Further, in Sect.~\ref{se:B\'ezier} we review relevant properties of B\'ezier curves, which are a key ingredient to \texttt{CelticGraph}.

\subsection{Knot theory}
\label{se:knots}

The mathematical theory of knots and links investigates interlacing curves in three dimensions; this theory has a long and distinguished history in Mathematics~\cite{james}. The motivating problem of Knot Theory is \emph{equivalence}: whether two knots can be deformed into each other. A common technique involves projecting the given curves from three dimensions into the plane; the resulting ``knot diagram'' is a 4-regular planar graph, with vertices at the points where the curve crosses itself (in the projection). For example, a picture of the \emph{trefoil knot} and its knot diagram are in Fig.~\ref{fi:trefoilDiagram}.
Properties of a knot or link may be deduced from the knot diagram, and the equivalence problem can sometimes be solved using knot diagrams.

\begin{figure}[t]
  \centering      
  \vspace*{-1ex}
  \includegraphics[width=5.5cm]{../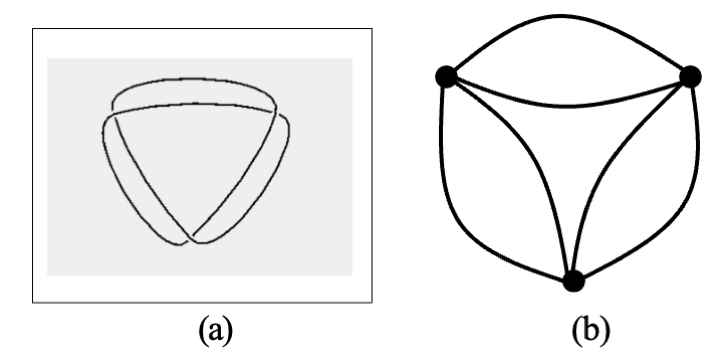}
  \caption{\emph{(a) The {\emph{trefoil knot}} and (b) its resulting ``knot diagram''.}}
  \label{fi:trefoilDiagram}
  \vspace*{-2ex}
\end{figure}

\subsection{Celtic art}
\label{se:celtic}

Knot patterns (``Celtic knots'') are often described as a characteristic ornament of so-called ``Celtic art''. In fact, since the epoch of the Waldalgesheim style (4$^\text{th}$/3$^\text{rd}$ century BC), Celtic art (resp.~ornamentation) is characterised by complex, often geometric patterns of interlinked, opposing or interwoven discs, loops and spirals. The floral models originate from Mediterranean art; in the Celtic context, they were deconstructed, abstracted, arranged paratactically or intertwined~\cite{M12,RB01-197}. 
It is still unclear whether the import of Mediterranean ornamental models was accompanied by the adoption of their meaning. However, the selective reception of only certain motifs suggests rather an adaptation based on specific Celtic ideas, which we cannot reconstruct exactly due to a lack of written sources.

In today's popular understanding, a special role in the transmission of actual or supposed ``Celtic''  art is attributed to the early medieval art of Ireland~\cite{Ma15-158,Roeb12-460}. However, such a restriction of Irish or insular art to exclusively Celtic origins would ignore the historical development of the insular-Celtic context in Ireland and the British Isles. The early medieval art of Ireland is partially rooted in indigenous Celtic traditions, but was also shaped by Late Antique Roman, Germanic and Anglo-Saxon, Viking and  Mediterranean-Oriental models~\cite{FN02-138}. 
The knot and tendril patterns of the 7$^\text{th}$/8$^\text{th}$ century can also be traced back to Mediterranean-Oriental manuscripts. Such patterns were subsequently used in Anglo-Saxon art, transmitted by braided ribbon ornaments and other patterns in the Germanic ``Tierstil'' (e.\,g.~on Late Antique soldiers' belts). For example, the famous Tara Brooch created in Ireland in the late 7$^\text{th}$ or early 8$^\text{th}$ century features both native and Germanic ~a combination of corresponding motifs~\cite{Roeb12-512}.  Also the knot patterns and braided/spiral ornaments described as typical ``Celtic'' such as in the Book of Kells~\cite{bookOfKells}  and other manuscripts 
can be linked to Germanic/Anglo-Saxon and late Roman traditions. The ornamentation today often perceived as ``Celtic'' is therefore less exclusive or typical ``Celtic'', but rather a result of diverse influences that reflect an equally complex historical-political development~\cite{Ma15-159}.
So it is not surprising that the so-called Celtic motifs often presented in tattoo studios of the 21st century, such as braided bands, are not of Celtic but Germanic origin~\cite{M12-119}.

Note that while ``Celtic knots''  are related to the mathematical theory of knots, the prime motivation of the two topics is different. For example, the \emph{Bowen knot}~\cite{Clark1827}, a commonly used decorative knot that appears in Celtic cultures, is uninteresting in the mathematical sense (it is clearly an ``unknot'').

\subsection{Graph drawing as art}
\label{se:gd}

Note that the purpose of \texttt{CelticGraph} is different from most Graph Drawing systems. Our aim is to produce decorative and artistically pleasing pictures of graphs, not to make pictures of graphs that effectively convey information and insight into data sets. Other examples of such kind of graph drawing approaches include the system by Devroye and Kruszewski to  render images of botanical trees based on random binary trees~\cite{DBLP:conf/gd/DevroyeK95},  a system \texttt{GDot-i} for drawing graphs as dot paintings inspired by the dot painting style of Central Australia~\cite{GDot,GDotPoster}, and research on bobbin lacework~\cite{bobbinLace}. Also related to our work are Lombardi graph drawings,  artistic representations of graphs that 
contain edges represented as circular arcs and vertices represented with perfect angular resolution~\cite{DuncanEGKN10a}.

\subsection{B\'ezier curves}
\label{se:B\'ezier}

The Gestalt law of continuity~\cite{koffka} implies that humans are more likely to follow continuous and smooth lines
rather than broken or jagged lines. To draw graphs as Celtic knots, certain circuits in the graph need to be drawn as smooth curves.

Computer Graphics has developed many models for smooth curves; one of the simplest is a \emph{B\'ezier curve}. A \emph{cubic B\'ezier curve} with control points $p_0, p_1, p_2, p_3$ is defined parametrically by:
\begin{equation}
\label{eq:B\'ezier}
    p(t) = (1-t)^3 p_0 + 3(1-t)^2 t p_1 + 3 (1-t) t^2 p_2 + t^3 p_3,
\end{equation}
for $0 \leq t \leq 1$.
The following properties of cubic B\'ezier curves are well-known~\cite{DBLP:books/daglib/0067131}:
\begin{itemize}
    \item The endpoints of the curve are the first and last control points, that is, $p(0) = p_0$ and $p(1) = p_3$.
    \item Every point on the curve lies within the convex hull of its control points.
    \item The line segments $(p_0,p_1)$ and $(p_3,p_2)$ are tangent to the curve at $p_0$ and $p_3$ respectively.
    We say $(p_0,p_1)$ and $(p_3,p_2)$ are  \emph{control tangents} of the curve.
    \item The curve is \emph{$C^k$ smooth} for all $k>0$, that is, all the derivatives are continuous.
\end{itemize}

Drawing each edge of a graph as a cubic B\'ezier curve ensures smoothness in the edges, and can improve readability~\cite{KaiXu}. However, for \texttt{CelticGraph} we need certain \emph{circuits} in the graph to be smooth curves, so we need the curves representing certain incident edges to be \emph{joined smoothly}. Suppose that $p(t)$ and $q(t)$ are two cubic B\'ezier curves that meet at a common endpoint. Then the curve formed by joining $p(t)$ and $q(t)$ is $C^1$ smooth as long as the control tangents to each curve at the common endpoint form a straight line; see Fig.\ref{fi:join}.
 
\begin{figure}[t]
  \centering
  \vspace*{-3ex}
  \includegraphics[width=4.5cm]{../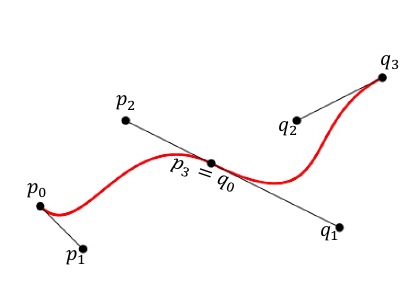}
     \vspace*{-2ex}
  \caption{\emph{Two cubic B\'ezier curves with control points $p_0, p_1, p_2, p_3$ and $q_0, q_1, q_2, q_3$, meeting at the point $p_3 = q_0$. The control tangents are shown in black; note that the points $p_2, p_3, q_1$ lie on a straight line and the join is $C_1$ and visually smooth.}}
  \label{fi:join}
   \vspace*{-2ex}
\end{figure}

Mathematically, $C^1$ smoothness is adequate. However, the infinitesimality of Mathematics sometimes does not model human perception well. For example, the curve in Fig.~\ref{fi:kink2} is mathematically smooth, but given a fixed-resolution screen and the limits of human perception, it appears to have a non-differentiable ``kink''.

\begin{figure}[t]
  \centering
  \vspace*{-1ex}
  \includegraphics[width=2.5cm]{../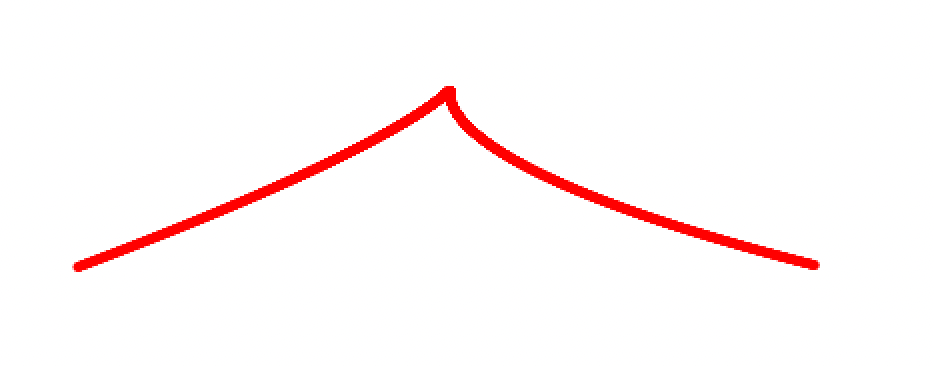}
  \caption{\emph{A cubic B\'ezier curve with a ``kink'', i.\,e.~a point of large curvature, near the middle. The curve is $C^1$ smooth; but the kink, together with the limits of human perception and screen resolution, mean that the curve does not look smooth.}}
  \label{fi:kink2}
  \vspace*{-2ex}  
\end{figure}

For this reason,  it is desirable that the \emph{curvature}~\cite{Ferguson} of each edge is not too large. Informally, the curvature $\kappa(t)$ is the ``sharpness'' of the curve. More formally, $\kappa(t)$ is the inverse of the radius of the largest circle that can sit on the curve at $p(t)$ without crossing the curve.
For a cubic B\'ezier curve $p(t) = (x(t), y(t))$, the curvature at $p(t)$ is given by~\cite{DBLP:books/daglib/0067131}:
\begin{equation}
    \label{eq:curvature}
    \kappa(t) = \frac{\left | \dot{x} \ddot{y} - \ddot{x} \dot{y} \right |}{\left ( \dot{x}^2 + \dot{y}^2 \right ) ^ {1.5}},
\end{equation}
where $\dot{f}$ denotes the derivative of $f$ with respect to $t$.
Note that $\kappa(t)$ is continuous except for values of $t$ where both $\dot{x}(t)$ and $\dot{y}(t)$ are zero.
For \texttt{CelticGraph}, we need $C^1$ smooth curves with reasonably small curvature.

\subsection{Related Work}

\subsubsection{Knot diagrams as Lombardi graph drawings}
Closely related are Lombardi graph drawings which are graph drawings with circular-arc edges and perfect angular resolution~\cite{DuncanEGKN10a}. 
Previous studies have demonstrated that a significant group of 4-regular planar graphs can be represented as plane Lombardi graph drawings~\cite{LD18}. However, there are certain restrictions. Notably, if a planar graph contains a loop, it cannot be depicted as a Lombardi drawing. In our approach, every 4-regular planar graph can be transformed in a knot (links).

\subsubsection{Celtic knots by tiling and algorithmic design methods}  
Celtic knots can be created using tiling and algorithmic design methods. George Bain introduced a formal method for creating Celtic knot patterns~\cite{Bain73}, which subsequently has been simplified to a three-grid system by Iain Bain~\cite{Bain86,Glassner99}. Klempien-Hinrichs and von Totth study the  generation of Celtic knots using collage grammars~\cite{Klempien-HinrichsT10}. And Even-Zohar et al.~investigate sets of planar curves which yield diagrams for all knots~\cite{osti}. None of those methods use graphs or are graph drawing approaches.

\subsubsection{Drawing graphs with B\'ezier curve edges} 
A number of network visualisation systems use B\'ezier curves as edges. These include \texttt{yWorks}~\cite{yWorks}, \texttt{GraphViz}~\cite{GraphViz}, \texttt{Vanted}~\cite{JunkerKS06b},  \texttt{Vizaj}~\cite{Vizaj}, and the framework proposed in~\cite{GoodrichW98}.
In many cases, such systems allow the user to route the curves by adjusting control points, but few provide automatic computation of the curves.
However, there are some exceptions.  For example, in the \texttt{GraphViz} system, B\'ezier curve edges are routed within polygons to avoid edge crossings~\cite{AbelloGansner}.
Force-directed methods are also popular for computing control points of B\'ezier curve edges~\cite{BrandesWagner,Finkel,Fink}.
Brandes et al.~present a similar method to the ``cross'' method in Sect.~\ref{se:BezierDrawing}, applied to transport networks~\cite{Brandes2000Impro-5663}. However, only~\cite{Fink} considers smoothness in more than one edge. None of those systems or approaches consider Celtic knots.

\section{Overview of the \texttt{CelticGraph} process}
\label{se:overview}

In this Section we outline \texttt{CelticGraph}, our framework for creating aesthetically pleasing pictures of 4-regular planar graphs as knots. 
The \texttt{CelticGraph} procedure is shown in Fig.~\ref{fi:processV2}; it has 5 steps:
    \begin{enumerate}
        \item [(a)] Create a topological embedding $G'$ of the input 4-regular planar graph $G$. 
        \item [(b)] Create a planar straight-line drawing $D$ of the plane graph $G'$.
        \item [(c)] Create a special circuit partition of $G'$, called a ``threaded circuit partition''.
        \item [(d)] Using the straight-line drawing $D$ and the threaded circuit partition $C$, create a drawing $D'$ of $G$ with cubic B\'ezier curves as edges. 
        \item [(e)] Render the drawing $D'$ as a knot, on the screen or with a 3D printer.
    \end{enumerate}

\begin{figure}[t]
  \centering
  \vspace*{-1ex}
  \includegraphics[width=\textwidth]{../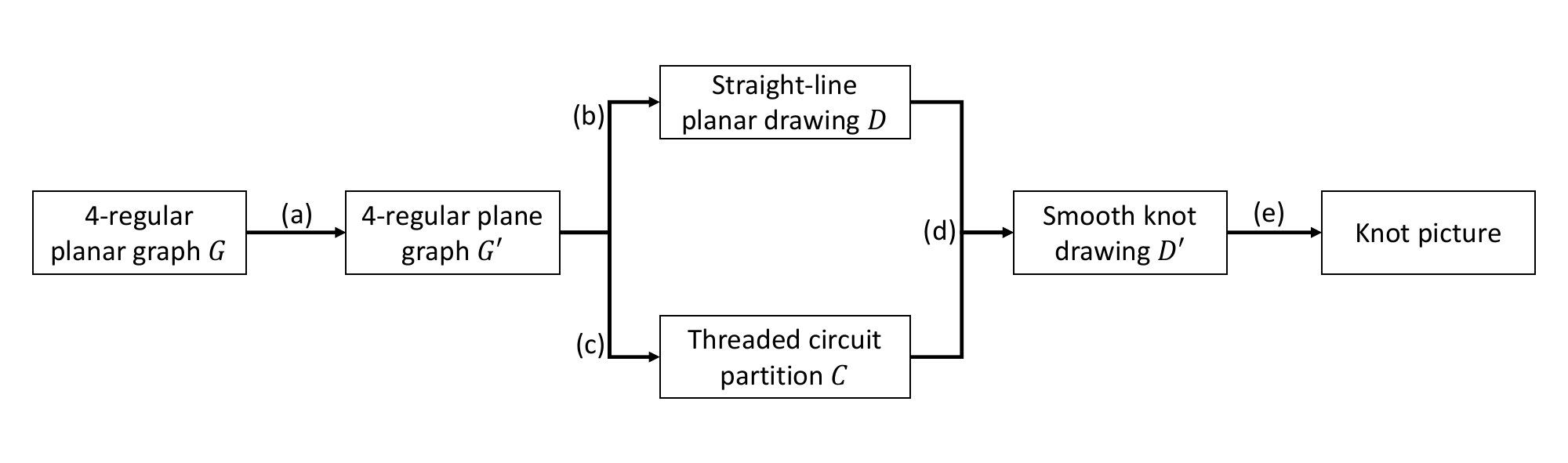}
       \vspace*{-5ex}
  \caption{\emph{The \texttt{CelticGraph} process}}
  \label{fi:processV2}
  \vspace*{-2ex}
\end{figure}

The first two steps can be done using standard Graph Drawing methods~\cite{DiBattistaETT99}. Steps (c) and (d) are described in the following Sections, step (e) can be done using standard rendering methods.

\section{Step (c): Finding the threaded circuit partition}\label{sec:c}

Here we define \emph{threaded circuit partition}, a special kind of circuit partition of a plane graph, and show how to find it in linear time.

A \emph{circuit} in a graph $G$ is a list of distinct edges $( e_0, e_1, \ldots, e_{k-1} )$ such that $e_i$ and $e_{i+1}$ share a vertex $i=0,1,\ldots, k-1$ (here, and in the remainder of this paper, indices in a circuit of length $k$ are taken modulo $k$.) We can write the circuit as a list of vertices $(u_0, u_1, \ldots, u_{k-1})$ where $e_i = (u_i, u_{i+1})$. Note that a vertex can appear more than once in a circuit, but an edge cannot.
A set $C = \{ c_0, c_1, \ldots, c_{h-1} \}$ of circuits in a graph $G$ such that every edge of $G$ is in exactly one $c_j$ is a \emph{circuit partition} for $G$. Given a circuit partition, we can regard $G$ as a directed graph by directing each edge so that each $c_i$ is a directed circuit.

A path $(\alpha,\beta,\gamma)$ of length two (that is, two edges $(\alpha,\beta)$ and $(\beta,\gamma)$) in a 4-regular plane graph $G$ is a \emph{thread} if edges $(\alpha,\beta)$ and $(\beta,\gamma)$ are not contiguous in the cyclic order of edges around $\beta$. This means that
there is an edge between $(\alpha,\beta)$ and $(\beta,\gamma)$ in both counterclockwise and clockwise directions in the circular order of edges around $\beta$.
We say that $\beta$ is the \emph{midpoint} of the thread $(\alpha,\beta,\gamma)$. Note that each vertex in $G$ is the midpoint of two threads; see Fig.~\ref{fi:threads}(a).
For every edge $(\alpha,\beta)$ in $G$, there is a unique thread $(\alpha,\beta,\gamma)$; we say that the edge $(\beta,\gamma)$ is the \emph{next edge after $(\alpha,\beta)$}.
For each vertex $u_j$ on a circuit $c = (u_0, u_1, \ldots , u_{k-1})$ with $k>1$ there is a path $p_j = (u_{j-1} , u_j, u_{j+1})$ of length two such that $u_j$ is the midpoint of $p_j$.  In fact we can consider that the circuit $c$ consists of $k$ paths of length two. We say that the circuit $c$ is \emph{threaded} if for each $j$, the path $p_j = (u_{j-1} , u_j, u_{j+1})$ is a thread. Note that in such a circuit, the edge $(u_j, u_{j+1})$ is the (unique) next edge after $(u_{j-1}, u_j)$ for each $j$.
A circuit partition $C = \{ c_0, c_1, \ldots, c_{h-1} \}$ is \emph{threaded} if each circuit $c_j$ is threaded. In the case that $h=1$, a threaded circuit partition defines a \emph{threaded Euler circuit}; see Fig.~\ref{fi:threads}(b).

\begin{figure}[t]
    \centering
    \vspace*{-1ex}
    \includegraphics[width=0.8\textwidth]{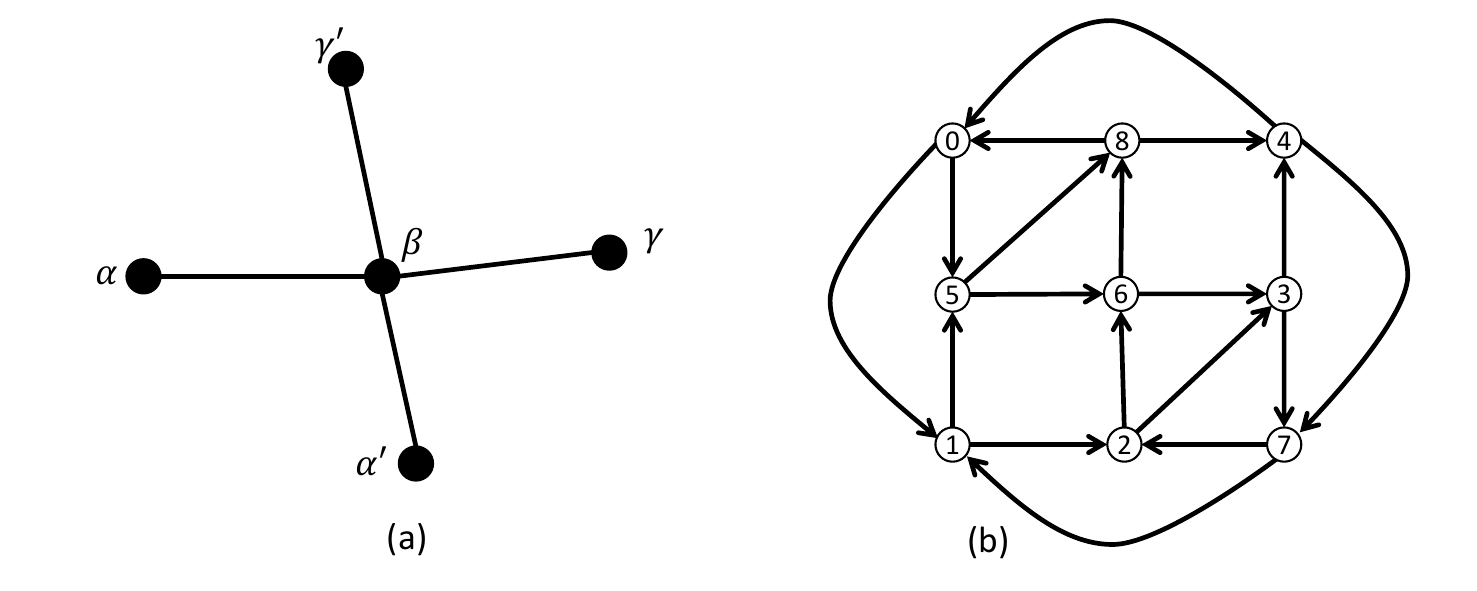}
    \caption{\emph{(a) Two threads, each with midpoint $\beta$. 
    (b) Plane 4-regular graph with a threaded Euler circuit $(0,1,2,3,4,0,5,6,3,7,1,5,8,4,7,2,6,8)$.}}
    \label{fi:threads}
    \vspace*{-2ex}
\end{figure}
    

An assignment $\upsilon (p) \in \{ -1 , +1 \}$ of an integer $-1$ or $+1$ to each thread $p$ of a 4-regular plane graph $G$ is an \emph{under-over assignment}.
Note that for each vertex $\beta$ of $G$, there are two threads $p_\beta$ and $p'_\beta$ with midpoint $\beta$. We say that an under-over assignment $\upsilon$ is \emph{consistent} if $\upsilon(p_\beta) = - \upsilon(p'_\beta)$ for each vertex $\beta$.

An under-over assignment $\upsilon$ is \emph{alternating} on the circuit $(p_0, p_1, \ldots , p_{k-1})$ if $\upsilon(p_i) = - \upsilon(p_{i+1})$ for each $i$.
An under-over assignment for a graph with a threaded circuit partition $C$ is \emph{alternating} if it is alternating on each circuit in $C$.

Intuitively, a consistent under-over assignment designates which thread passes under or over which thread, and an alternating under-over assignment corresponds to an alternating knot or link~\cite{Murasugi93}.



The following theorem gives the essential properties of threaded circuit partitions that are essential for \texttt{CelticGraph}.





\begin{theorem}
\label{th:threaded}
  Every 4-regular plane graph has a unique threaded circuit partition, and this threaded circuit partition has a consistent alternating under-over assignment. Further, this threaded circuit partition can be found in linear time.
\end{theorem}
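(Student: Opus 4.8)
The plan is to treat the three assertions of the theorem separately: the first I would reduce to the standard correspondence between circuit partitions and transition systems, the second to the fact that a $4$-regular plane graph admits a checkerboard $2$-colouring of its faces, and the third to the observation that the underlying traversal is purely local.

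\textbf{Uniqueness and existence of the threaded circuit partition.} I would use that a circuit partition of a $4$-regular graph $G$ is the same datum as a \emph{transition system}: a partition, at each vertex $v$, of the four edge-ends at $v$ into two pairs. Indeed, given a transition system, repeatedly ``following the pairings'' produces closed walks that use each edge exactly once (on a walk, each vertex is entered along one edge-end and left along its partner, so each edge-end, hence each edge, is traversed once), and conversely the way the circuits of a partition pass through a vertex induces such pairings, from which the partition is recovered. Unfolding the definition of ``threaded'', a circuit partition is threaded exactly when, at every vertex $\beta$, the chosen pairing matches each edge with its rotational opposite: the condition that $(\alpha,\beta)$ and $(\beta,\gamma)$ be ``not contiguous'' around $\beta$ singles out, among the three pairings of four cyclically ordered edge-ends, the unique ``opposite'' one. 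Hence there is exactly one threaded transition system, so exactly one threaded circuit partition, and it exists because the opposite pairing is a legitimate transition system. (Loops are a minor special case, handled by reading contiguity on edge-ends; I would state the convention and move on.)

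\textbf{A consistent alternating under-over assignment.} Since every vertex of $G$ has even degree, the faces of $G$ admit a proper $2$-colouring (classically this holds iff all degrees are even: the planar dual has only even cycles, since each corresponds to an edge cut of the Eulerian graph $G$). I would fix such a checkerboard colouring and use it, at each vertex $\beta$, to assign a fixed over/under role to the two threads with midpoint $\beta$, via the standard local rule that turns a checkerboard colouring into an alternating diagram. Consistency is then automatic, since the two threads at $\beta$ receive opposite signs by construction. For the alternating property I would cross a single edge $(u_j,u_{j+1})$ of a threaded circuit and check that the colour configuration seen at $u_j$ is exchanged at $u_{j+1}$, forcing $\upsilon(p_j)=-\upsilon(p_{j+1})$; this is a finite local verification using only that the two faces meeting along an edge have opposite colours. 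Equivalently, this is the well-known statement that the shadow of a connected link diagram always underlies an alternating diagram.

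\textbf{Linear time, and the main obstacle.} I would store the embedding of $G'$ as a rotation system (a doubly-connected edge list), so that the ``next edge after $(\alpha,\beta)$'' — the rotational opposite of $(\alpha,\beta)$ at $\beta$ — is returned in $O(1)$ time; the partition is then computed by the obvious loop (while an unused edge remains, start there and follow ``next edge'', marking edges, until returning), which is $O(|E|)=O(|V|)$ since each edge is touched $O(1)$ times. The checkerboard colouring is obtained by one breadth-first search on the dual, and the under-over assignment propagated along the circuits in the same pass, all in linear time. The only step with genuine mathematical content is the second: the assignment $\upsilon$ rests on the parity phenomenon that no threaded circuit has odd length — which is precisely what the checkerboard colouring encodes — and the care lies in choosing the local sign rule so that it delivers consistency and alternation at once; the first and third assertions are bookkeeping once the transition-system viewpoint and a suitable embedding data structure are in place.
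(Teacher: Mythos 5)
Your proposal is correct and follows essentially the same route as the paper: existence, uniqueness and linear time come from the fact that the ``opposite'' transition at each vertex (the unique next edge) is forced, and the under-over assignment comes from the checkerboard $2$-colouring of the faces (bipartiteness of the dual), with consistency checked at vertices and alternation checked across each edge. The only cosmetic difference is that you leave the local sign rule implicit, whereas the paper makes it explicit by orienting each threaded circuit and assigning $\pm 1$ according to the colour of the face on the left of the incoming edge.
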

\begin{proof}
The existence and uniqueness of the threaded circuit partition follows  from the fact that every edge has a unique next edge.
A simple linear-time algorithm to find the threaded circuit partition is to repeatedly choose an edge $e$ that is not currently in a circuit, then repeatedly choose the next edge after $e$ until we return to $e$.
We can direct every edge of a 4-regular planar graph $G$ so that each circuit in a given threaded circuit partition $C$ is a directed circuit.
This means that we can sensibly define the ``left'' and ``right'' faces of an edge. Since a 4-regular plane graph is bridgeless~\cite{DBLP:books/ws/NishizekiR04}, no face is both ``left'' and ``right''.

\begin{figure}[t]
    \centering
    \vspace*{-1ex}
    \includegraphics[width=0.8\textwidth]{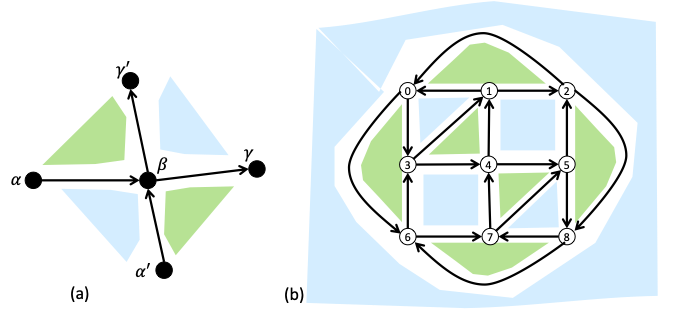}
         \vspace*{-2ex}
    \caption{\emph{(a) Two threads: $(\alpha,\beta,\gamma)$ has under-over assignment $+1$ (since the face to the left of $(\alpha,\beta)$ is green), and $(\alpha',\beta,\gamma')$ has under-over assignment $-1$ (since the face to the left of $(\alpha',\beta)$ is blue). (b) The faces of the graph are coloured according to its bipartition; note that each vertex has two incoming edges: one has a blue face to the left, the other has a green face to the left, and that the faces on the left of the threaded Euler circuit alternate in colour.}}
    \label{fi:threadsColoured}
    \vspace*{-2ex}
\end{figure}

Since the planar dual graph of a 4-regular planar graph is bipartite~\cite{DBLP:books/ws/NishizekiR04},
the faces can be coloured \emph{green} and \emph{blue}, such that no two faces of the same colour share an edge, see Fig.~\ref{fi:threadsColoured}. 
An immediate consequence is that the sequence of left faces to (directed) edges in a threaded circuit \emph{alternate} in colour. 
Now consider a thread $(\alpha,\beta,\gamma)$
in a (directed) threaded circuit in a threaded circuit partition. If the face to the left of $(\alpha,\beta)$ is green, then assign $+1$ to the path $(\alpha,\beta,\gamma)$; otherwise assign $-1$ to $(\alpha,\beta,\gamma)$.
Note that the face to the left of $(\beta,\gamma)$ is the opposite colour of the face to the left of $(\alpha,\beta)$, and so the under-over assignment is alternating. Further it is consistent, since at each vertex there is precisely one incoming arc with a green face on the left, and precisely one incoming arc with a blue face on the left.
\end{proof}

\subsubsection{Threaded Euler circuits}
Celtic knots are sometimes called ``endless knots'', and can be used to symbolise eternity. For this reason, a threaded \emph{Euler} circuit is desirable; such a circuit gives a drawing of the graph as a knot rather than a link. Using the algorithms in the proof of Theorem~\ref{th:threaded}, one can test whether a given plane graph has a threaded Euler circuit in linear time.
Note that different topological embeddings of a given planar graph may have different threaded circuit partitions; see Fig.~\ref{fi:differentLengthsV2}. It is clear that, in some cases, we can increase the length of a threaded circuit by changing the embedding. It is tempting to try to find a method to adjust the embedding to get a threaded Euler circuit. However, it can be shown that changing the embedding cannot change the \emph{number} of threaded circuits in a threaded circuit partition; see Appendix \ref{appendix:cardinality}.

\begin{figure}[t]
    \centering
    \vspace*{-1ex}
    \includegraphics[width=0.55\textwidth]{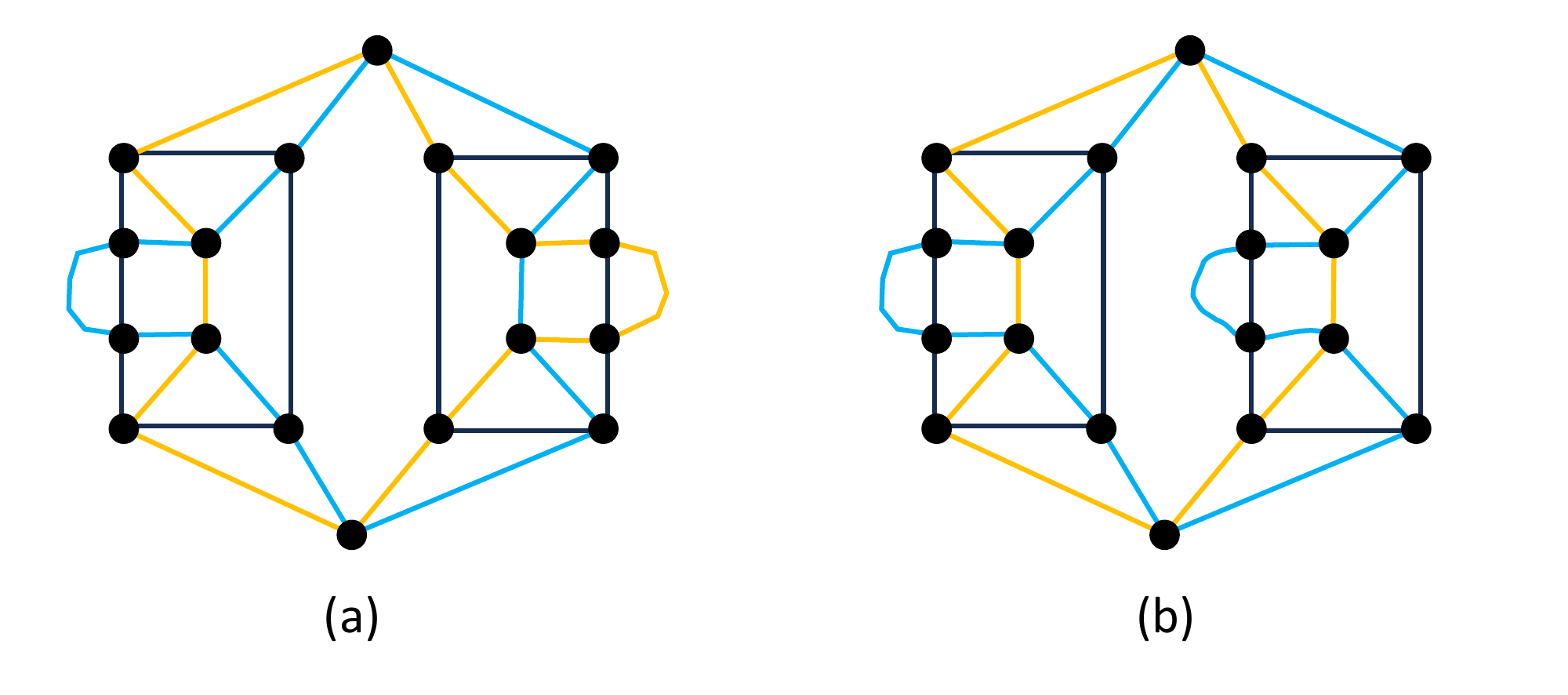}
         \vspace*{-2ex}
    \caption{\emph{Two topological embeddings of a planar graph. In (a), the plane graph has a threaded circuit partition of 4 circuits, with two circuits of size 6 (in black) and two circuits of size 12 (in blue and orange). In (b), the plane graph still has 4 threaded circuits: the two of size 6 are unchanged, but the lengths of the blue and orange circuits are 14 and 10 respectively.}}
    \label{fi:differentLengthsV2}
    \vspace*{-2ex}
\end{figure}

\section{Step (d): Smooth knot drawing with B\'ezier curves}
\label{se:BezierDrawing}

Step (d) takes a straight-line drawing $D$ of the input graph $G$, and replaces the straight-line edges by cubic B\'ezier curves in a way that ensures that each circuit in the threaded circuit partition found in step (c) is smooth. 

A central concept for the smooth drawing method is a ``cross'' $\chi_u$ at each vertex $u$. For each $u$, $\chi_u$ consists of 4 line segments called ``arms''. The four arms are all at right angles to each other, leading to a perfect angular resolution. Each arm of $\chi_u$ has an endpoint at $u$.
\begin{figure}[t]
    \centering
    \vspace*{-1ex}
    \includegraphics[width=0.7\textwidth]{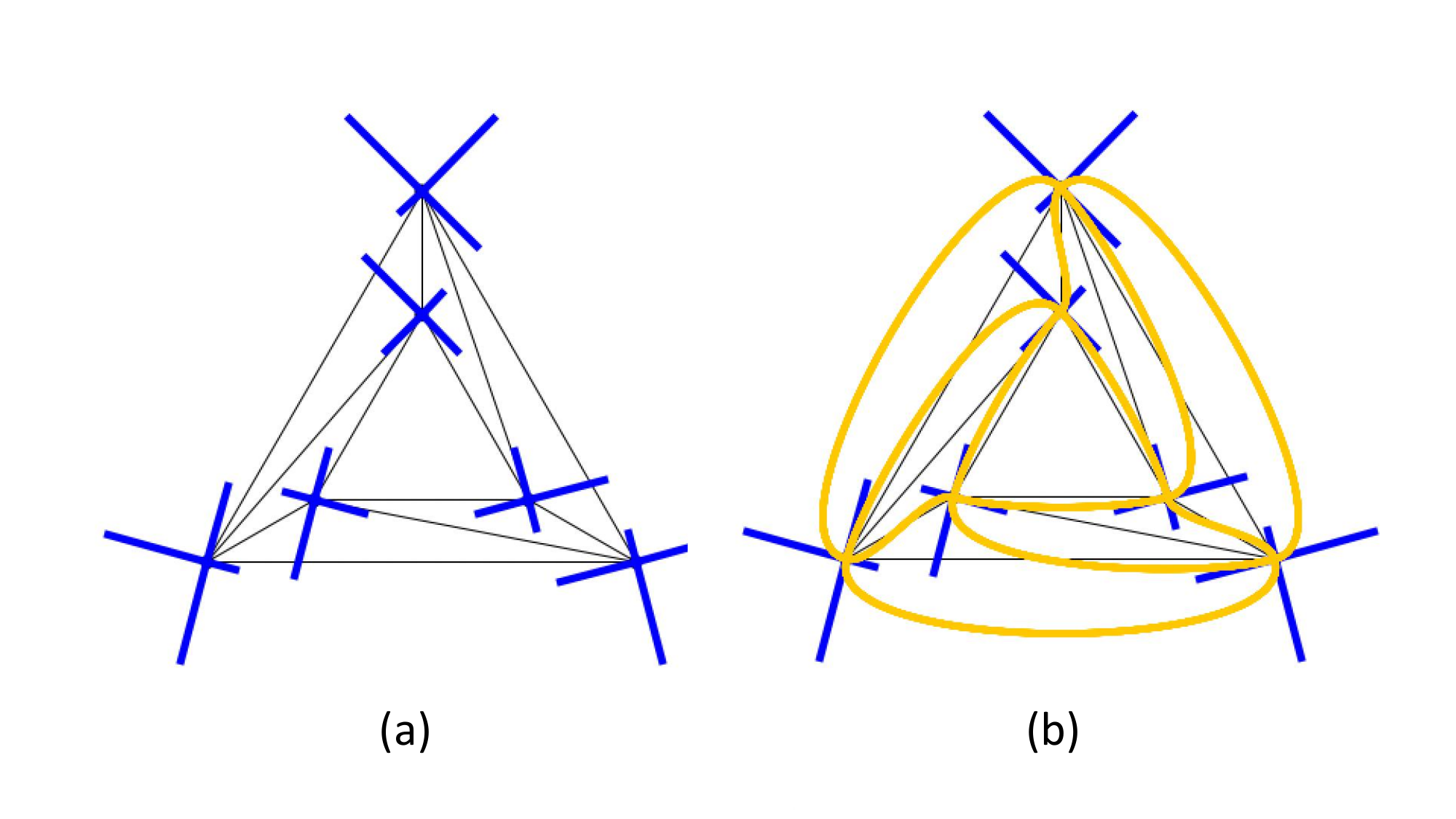}
    \caption{\emph{(a) The 3-prism with a cross at each vertex. (b) Edges drawn as B\'ezier curves, using the arms of the crosses as control tangents.}}
    \label{fi:crosses2}
    \vspace*{-2ex}
\end{figure}
This is illustrated in Fig.~\ref{fi:crosses2}(a).
Each edge $(u,v)$ then is drawn as a cubic B\'ezier curve with endpoints $u$ and $v$, and the control tangents of the curve are arms of the crosses $\chi_u$ and $\chi_v$ (illustrated in Fig.~\ref{fi:crosses2}(b)).

For this approach, we need to choose three parameters for each cross $\chi_u$:
\begin{enumerate}
    \item The mapping between the four arms of $\chi_u$ and the four edges incident to $u$.
    \item The angle of orientation of the cross.
    \item The length of each arm of the cross.
\end{enumerate}
These parameters are discussed in the next subsections. The methods described in Sect.~\ref{se:edgeArmMapping} and \ref{se:orientation} are analogous to the methods in~\cite{Brandes2000Impro-5663}; Sect.~\ref{se:armLengths} is not.

\subsection{The edge-arm mapping}
\label{se:edgeArmMapping}

Suppose that $u$ is a vertex in the straight-line drawing $D$ of the input plane graph.
We want to choose the mapping between the arms of the cross $\chi_u$ and the edges incident with $u$ so that the arms are approximately in line with the edges. 

Now suppose that the edges incident with $u$ are $e_0, e_1, e_2, e_3$ in counterclockwise order around $u$.
For each $i=0,1,2,3$ we choose an arm $\alpha_{i}$ of the cross $\chi_u$ corresponding to $e_i$ so that the  counterclockwise order of arms around $u$ is the same as the order of edges around $u$; that is, the counterclockwise order of arms is
$\alpha_{0}, \alpha_{1}, \alpha_{2}, \alpha_{3}$. Note that this method separates multi-edges.

\subsection{The orientation of the cross}
\label{se:orientation}

To improve the alignment of the arms of the crosses with the edges, we rotate each cross.
Suppose that the counterclockwise angle that edge $e_i$ makes with the horizontal direction is $\phi_i$. We want to rotate the cross by an angle $\theta$ to align with the edges, as best as possible. This is illustrated in Fig.~\ref{fi:anglesV3}.

\begin{figure}[t]
    \centering
    \vspace*{-1ex}
    \includegraphics[width=0.7\textwidth]{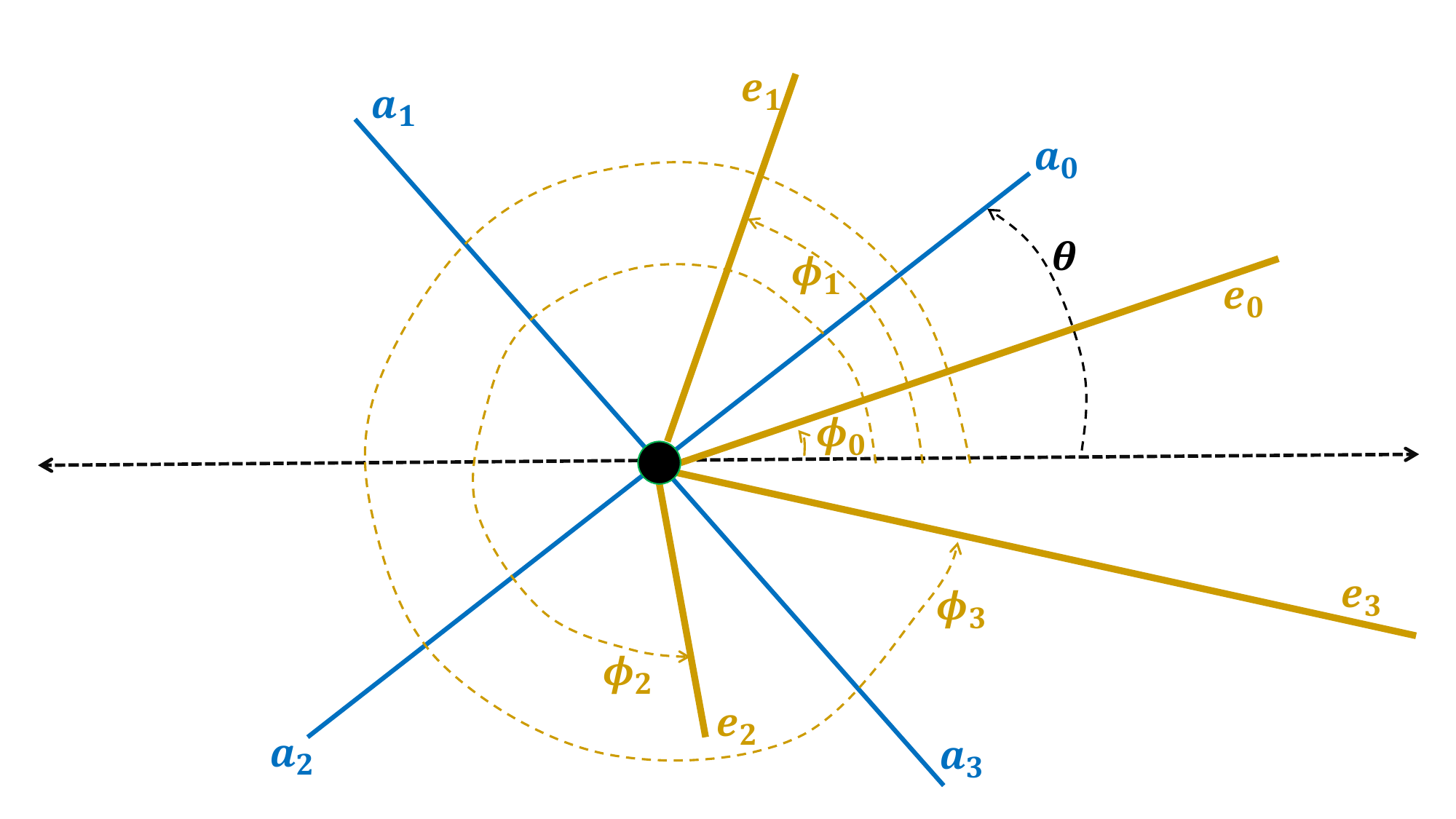}
    \caption{\emph{A cross rotated by an angle of $\theta$. Here the cross is in blue, the edges of the graph are in orange.}}
    \label{fi:anglesV3}
    \vspace*{-2ex}
\end{figure}

Consider the sum of squares error in rotating by $\theta$; this is:
\begin{equation}
    f(\theta) = \sum_{i=0}^{i=3} \left( \theta + \frac{i\pi}{2} - \phi_i \right)^2 .
\end{equation}

To minimise $f(\theta)$, we solve $f'(\theta) = 0$ and choose the optimum value:
\begin{equation}
    \theta^* = \frac{1}{4} \left( \sum_{i=0}^{i=3} \phi_i \right) - \frac{3 \pi}{4}.
\end{equation}

In Fig.~\ref{fi:orientation}, we show a graph with crosses oriented by this method.

\begin{figure}[t]
    \centering
    \vspace*{-1ex}
    \includegraphics[width=\textwidth]{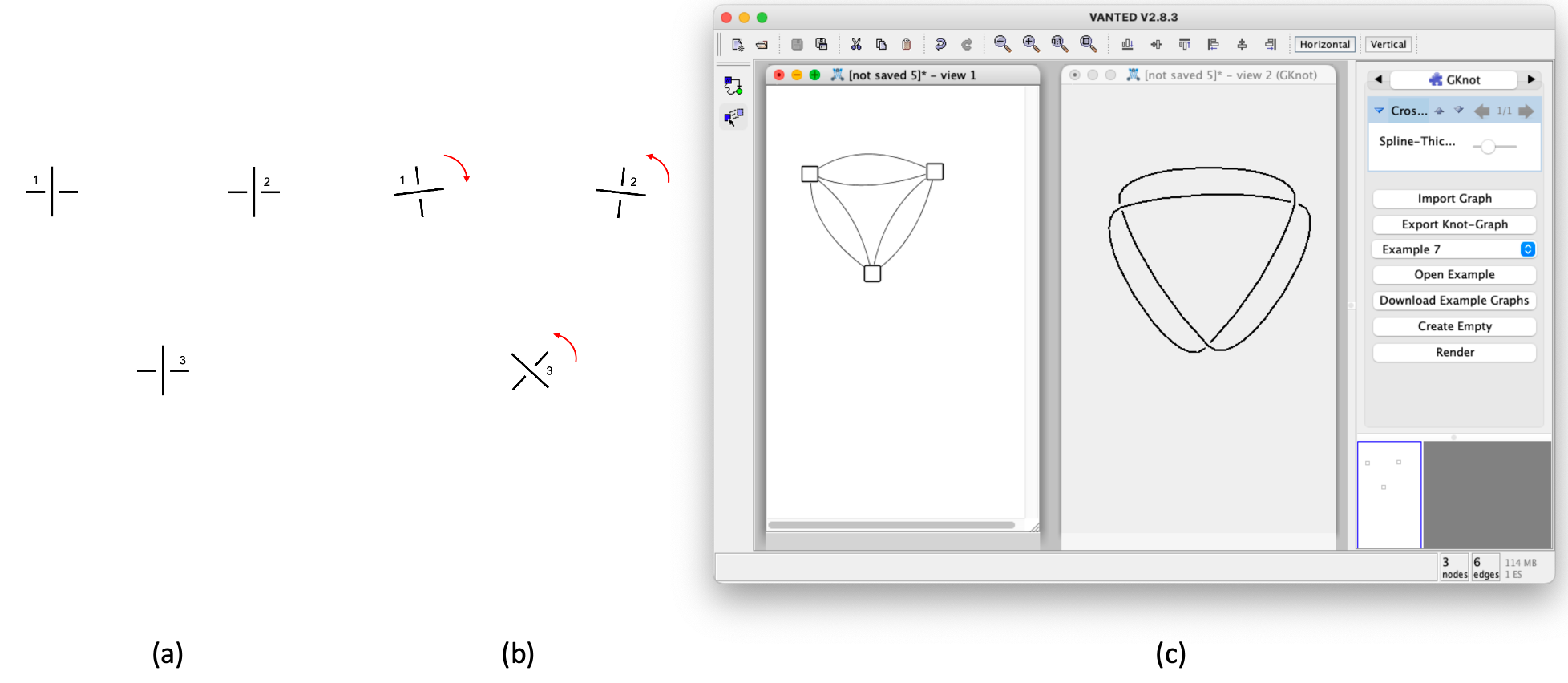}
    \caption{\emph{A graph with crosses oriented to align with edges as much as possible before (a) and after (b) applying the algorithm, and shown in \texttt{Vanted} (c).}}
    \label{fi:orientation}
    \vspace*{-2ex}
\end{figure}

\subsection{Arm length}
\label{se:armLengths}

\begin{figure}[t]
    \centering
    \includegraphics[width=0.95\textwidth]{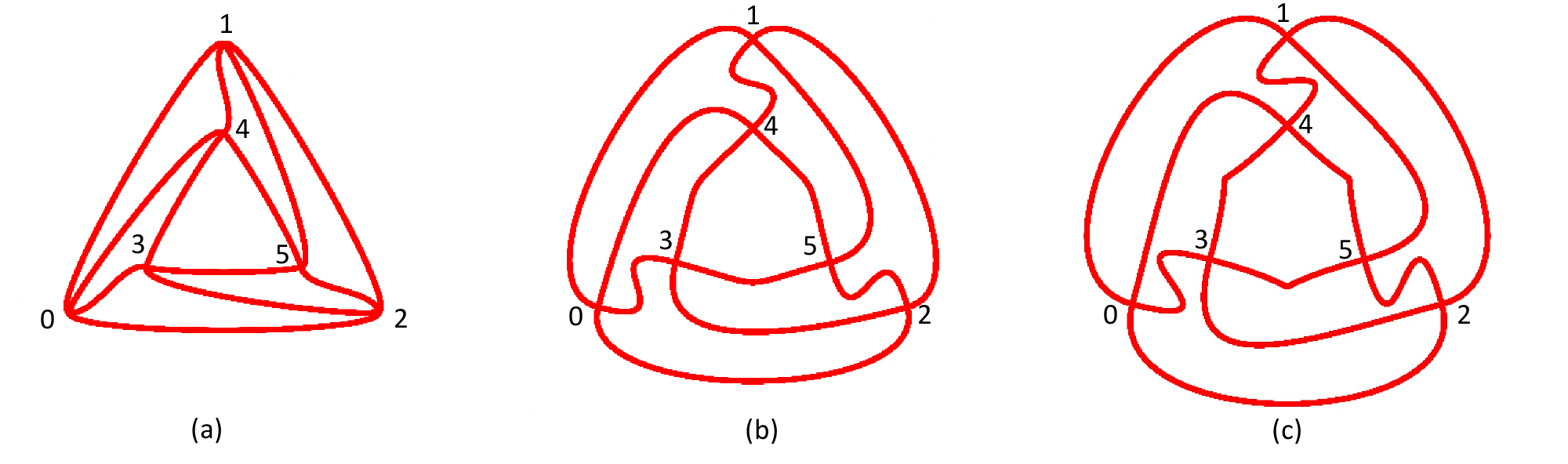}
         \vspace*{-2ex}
    \caption{\emph{Three drawings of the 3-prism, differing in edge curvature.}}
    \label{fi:curvature1}
    \vspace*{-2ex}
\end{figure}

Note that the ``apparent smoothness'' of an edge depends on its curvature. We illustrate this with Fig.~\ref{fi:curvature1}, which shows three B\'ezier curve drawings of the 3-prism. This graph has 3 threaded circuits, and we want to draw it so that each one of these threaded circuits appears as a smooth curve with limited curvature.
In Fig.~\ref{fi:curvature1}(a), the arms of the crosses are all very short, resulting in a B\'ezier curve drawing which is very close to a straight-line drawing. Each edge has low curvature in the middle and high curvature around the endpoints. The high curvature near their endpoints results in a lack of apparent smoothness where two B\'ezier curves join (at the vertices); it is difficult to discern the three threaded circuits. 
The arms of the crosses are longer in Fig.~\ref{fi:curvature1}(b), resulting in better curvature at the endpoints. However, here each of the edges $(0,3), (1,4)$, and $(2,5)$ have two points of large curvature; this is undesirable.
In Fig.~\ref{fi:curvature1}(c), the arms of the crosses are longer still. Each of the edges $(0,3), (1,4)$, and $(2,5)$ again have two points of large curvature, but the edges $(3,4),(4,5)$, and $(5,3)$ are worse: each has a ``kink'' (a point of very high curvature, despite being $C^1$-smooth).

Next we describe three approaches to choosing the lengths of the arms of the crosses, aiming to give sufficiently small curvature.

\subsubsection{Uniform arm lengths}
The curvature of the edge varies with lengths of the arms, and we want to ensure that the maximum curvature in each edge is not too large. The simplest approach is to use \emph{uniform arm lengths}, that is, judiciously choose a global value $\lambda$ and set the length of every arm length to $\lambda$. The drawings of the 3-prism in Fig.~\ref{fi:curvature1} have uniform arm lengths: $\lambda$ in Fig.~\ref{fi:curvature1}(a) is quite small, in Fig.~\ref{fi:curvature1}(c) it is relatively large, and (b) is in between. In fact, the problem with the uniform arm length approach is typified in Fig.~\ref{fi:curvature1}: if $\lambda$ is small, the curvature is high near the endpoints for all edges, and increasing $\lambda$ increases the curvature away from the endpoints, especially in the shorter edges. There is no uniform value of $\lambda$ that gives good curvature in both short and long edges.

\subsubsection{Uniformly proportional arm lengths}
 An approach that aims to overcome the problems of uniform arm length is to use \emph{uniformly proportional arm lengths}: we judiciously choose a global value $\alpha$, and then set the lengths of the two arms for edge $(u,v)$ to $\alpha d(u,v)$, where $d(u,v)$ is the Euclidean distance between $u$ and $v$. 
Fig.~\ref{fi:curvature2} shows typical results for the uniformly proportional approach. For $\alpha = 0.2$ the drawing is similar to Fig.~\ref{fi:curvature2}(a), and has similar problems. But for values of $\alpha$ near $0.5$ (Fig.~\ref{fi:curvature2}(b) and (c)), we have acceptable results; in particular, the shorter edges have acceptable curvature.

\begin{figure}[t]
    \centering
    \vspace*{-1ex}
    \includegraphics[width=0.95\textwidth]{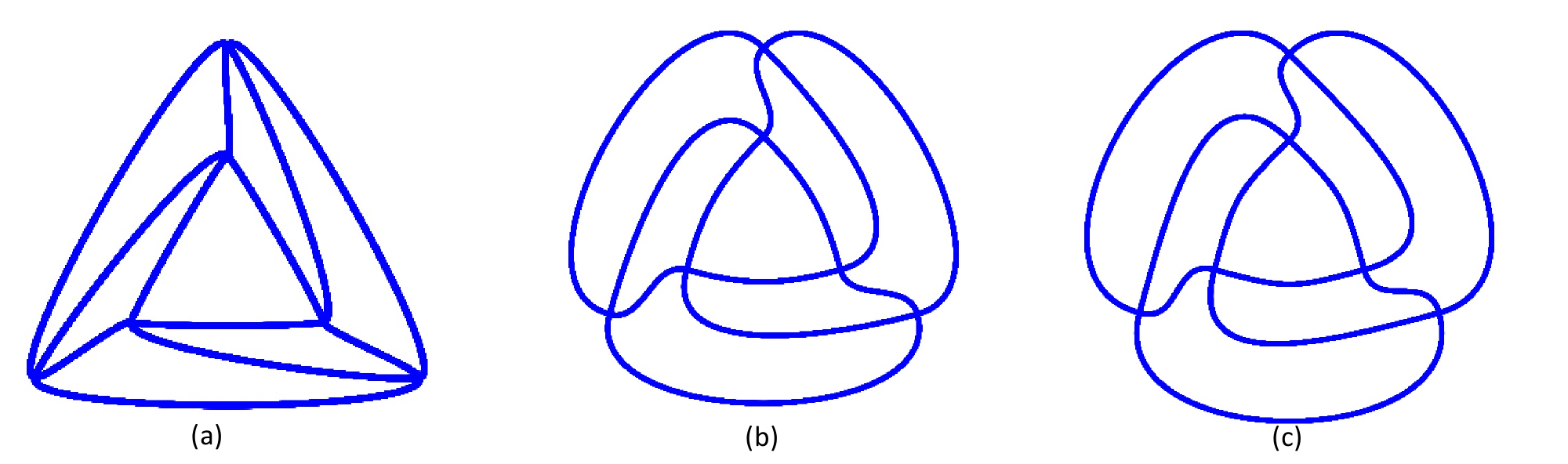}
         \vspace*{-1ex}
    \caption{\emph{The uniformly proportional approach: (a) $\alpha = 0.2$; (b) $\alpha = 0.4$; (c) $\alpha = 0.6$.}}
    \label{fi:curvature2}
    \vspace*{-3ex}
\end{figure}

\subsubsection{Optimal arm lengths}

\begin{figure}[t]
    \centering
    \vspace*{-1ex}
    \includegraphics[width=0.7\textwidth]{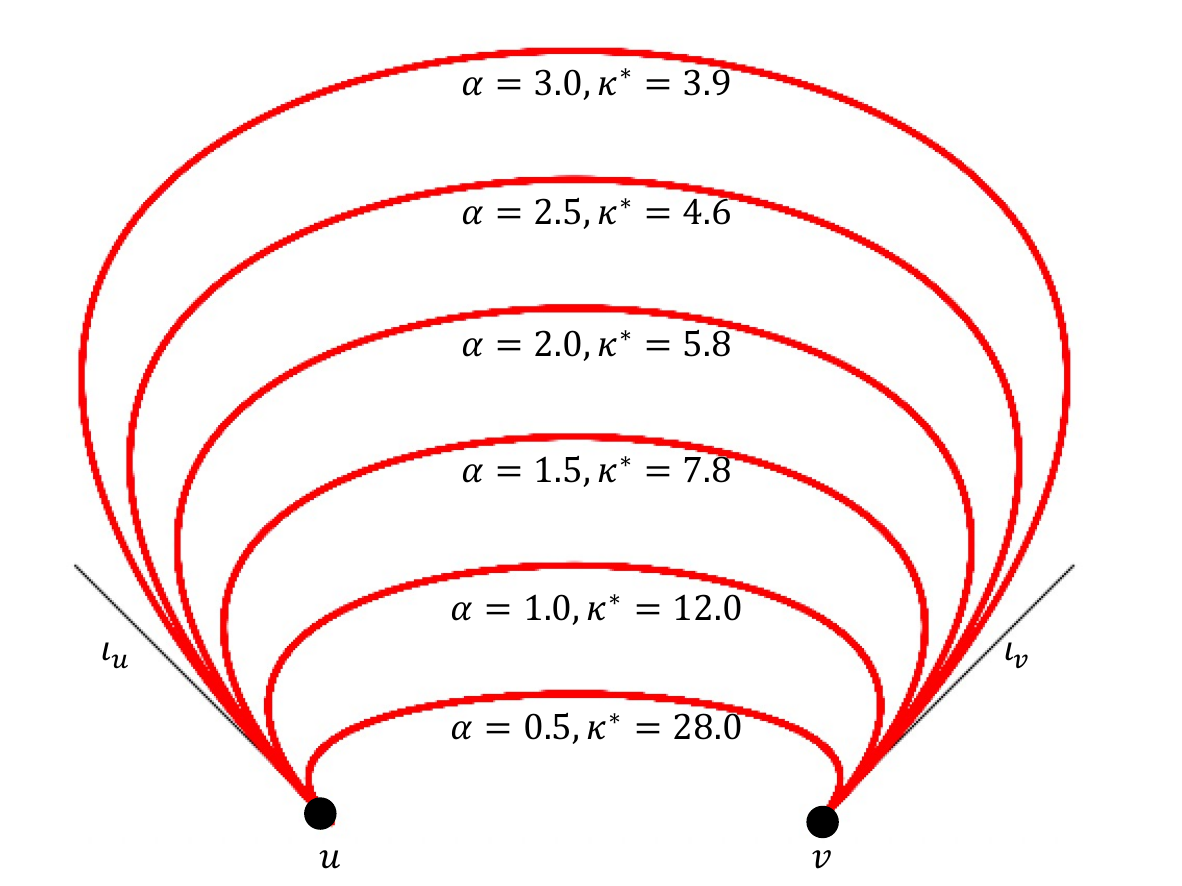}
    \caption{\emph{``Ballooning'' curves: as $\lambda_u$ and $\lambda_v$ increase, curvature falls but the curve becomes very long.}}
    \label{fi:curvature3}
    \vspace*{-2ex}
\end{figure}

A third approach is to choose the arm lengths at each end of an edge $(u,v)$ to minimise maximum curvature, as follows.
Suppose  $\kappa(t, \lambda_u , \lambda_v )$ is the curvature of the edge $(u,v)$ at point $t$ on the curve, when the arm lengths are $\lambda_u$ and $\lambda_v$ at $u$ and $v$ respectively. From Equation~(\ref{eq:curvature}), we note that

\begin{figure}[t]
    \centering
    \vspace*{-3ex}
    \includegraphics[width=1\textwidth]{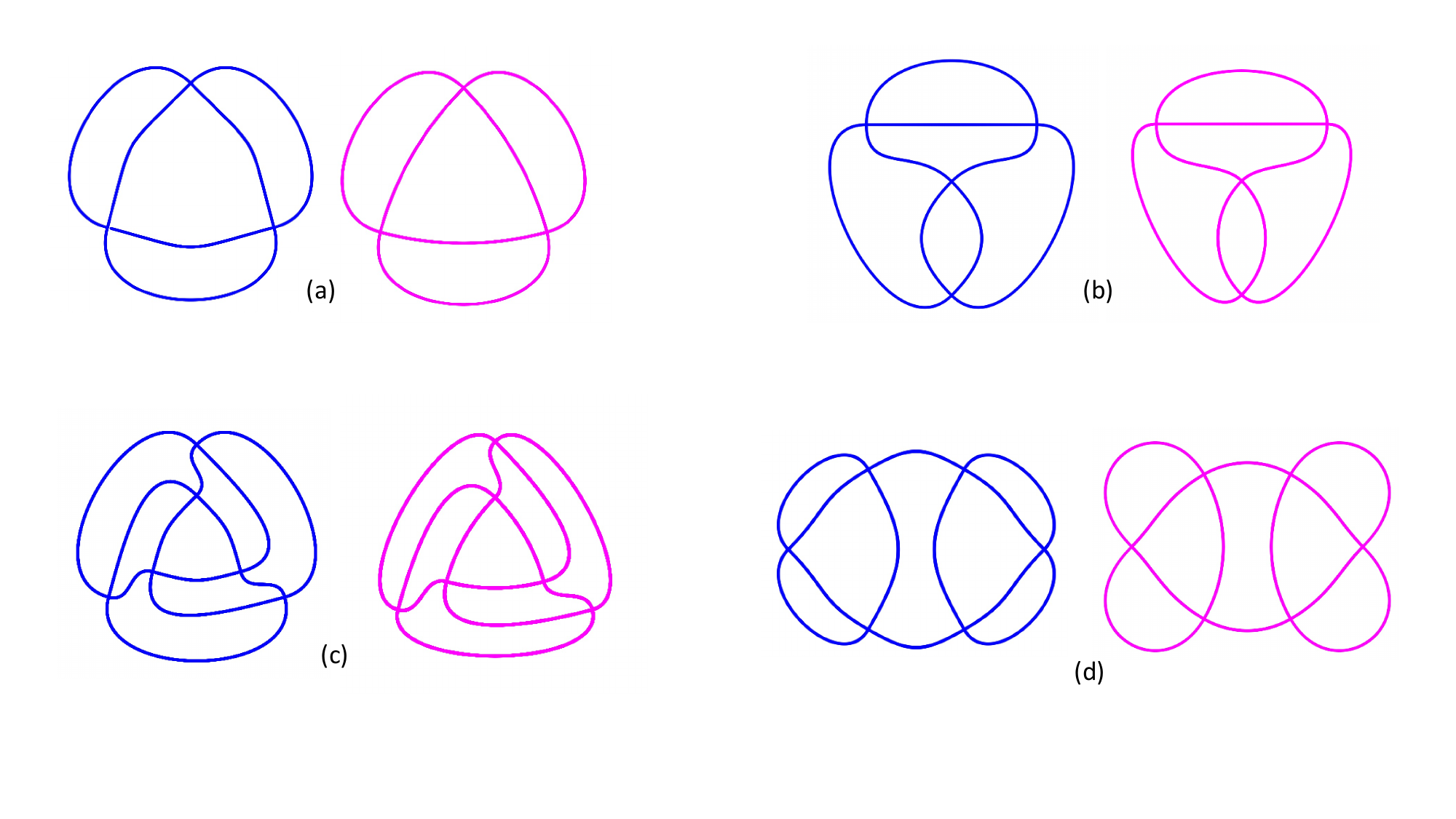}
         \vspace*{-10ex}
    \caption{\emph{Comparison of proportional arm length (blue) and optimal arm length (magenta): (a) Trefoil; (b) $K_4$ knot; (c) 3-prism; (d) Love knot.}}
    \label{fi:curvature4}
    \vspace*{-2ex}
\end{figure}

\begin{equation}
\label{eq:curvatureDerivative}
    \frac{\partial}{\partial t} \kappa(t, \lambda_u , \lambda_v )
    = \left| \frac{\dddot{x}\dot{y} - \dddot{y}\dot{x}}{(\dot{x}^2 + \dot{y}^2)^{1.5}}
    - \frac{\ddot{x} \dot{y} - \ddot{y} \dot{x}}{3(\ddot{x} + \ddot{y})^{2.5}} \right|
\end{equation}
as long as $(\dot{x}^2 + \dot{y}^2) \neq 0$ and $\ddot{x}\dot{y} \neq \ddot{y}\dot{x}$.
Since both $x$ and $y$ are cubic functions of $t$, equation~(\ref{eq:curvatureDerivative}) is not as complex as it seems, and it is straightforward (but tedious, because of the edge cases) to maximise  $\kappa(t, \lambda_u , \lambda_v )$ over $t$; that is, to find the maximum curvature $\kappa^*(\lambda_u, \lambda_v)$:
\begin{equation*}
    \label{eq:maxKappa}
    \kappa^* (\lambda_u, \lambda_v) = \max_{0 \leq t \leq 1} \kappa(t, \lambda_u , \lambda_v ).
\end{equation*}
Now we want to choose the arm lengths $\lambda_u$ and $\lambda_v$ to minimise $\kappa^* (\lambda_u, \lambda_v)$. Suppose that the unit vectors in the directions of the appropriate arms of $\chi_u$ and $\chi_v$ are $\iota_u$ and $\iota_v$ respectively. Note that we can express the internal control points $p_1$ and $p_2$ of the B\'ezier curve in terms of $\lambda_u$ and $\lambda_v$:
\begin{equation*}
    \label{eq:lambda}
    p_1 = (1 - \lambda_u) u + \lambda_u \iota_u,~~ p_2 = (1 - \lambda_v) v + \lambda_v \iota_v.
\end{equation*}
In this way, $\kappa^* (\lambda_u, \lambda_v)$ is linear in both $\lambda_u$ and $\lambda_v$ and finding a minimum point for  $\kappa^* (\lambda_u, \lambda_v)$ is straightforward.
However, in some cases, an edge with globally minimum maximum curvature may not be desirable. In Fig.~\ref{fi:curvature3}, for example, the curvature decreases as $\lambda_u$ and $\lambda_v$ increase; for large values of  $\lambda_u$ and $\lambda_v$ the curvature is quite low. The problem is that these large values make the curve very long (it ``balloons'' out), which might also cause unintended edge crossings.

For this reason, we choose an upper bounds $\epsilon_u$ and $\epsilon_v$ and take a minimum constrained by $\lambda_u \leq \epsilon_u$ and $\lambda_v \leq \epsilon_v$:
\begin{equation*}
    \label{eq:kappanMin}
    \kappa^*_{\min} = \min_{0 \leq \lambda_u \leq \epsilon_u, 0 \leq \lambda_v \leq \epsilon_v } \kappa^*(\lambda_u, \lambda_v).
\end{equation*}
We have found that $\epsilon_u = \epsilon_v = 0.75 d(u,v)$ gives good results, where $d(u,v)$ is the distance between $u$ and $v$. Values of $\lambda_u$ and $\lambda_v$ that achieve the (constrained) minimum $\kappa^*_{\min}$ are then used by the B\'ezier curves.
In practice, using such optimal arm lengths gives better results than using uniformly proportional arm lengths. In some cases the difference is not significant, but in others the optimal edges appear to be much smoother. See Fig.~\ref{fi:curvature4} for examples.

\begin{figure}[t]
    \centering
    \includegraphics[width=0.45\textwidth]{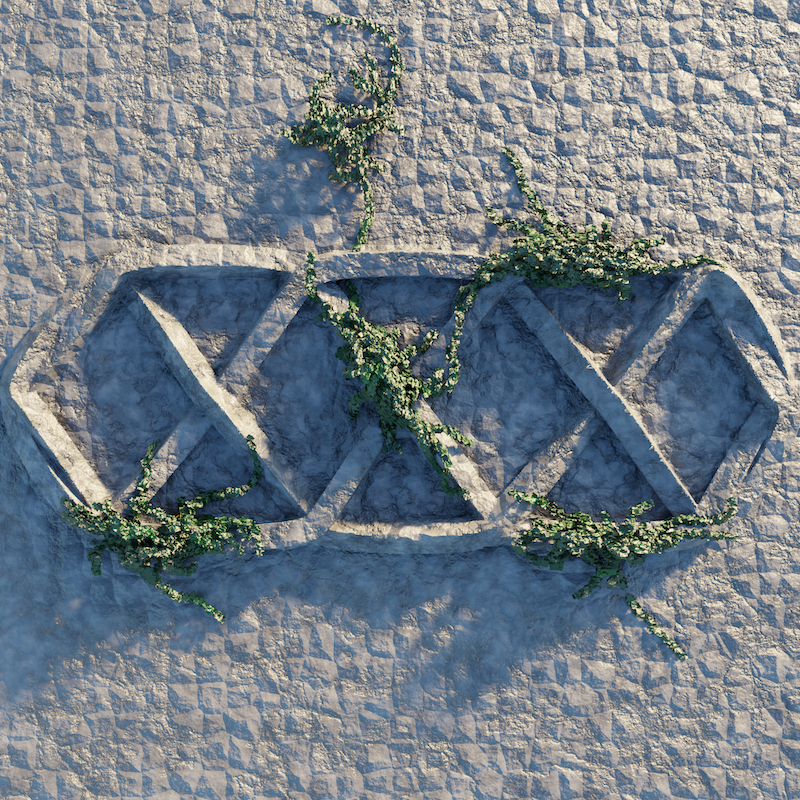}
    \includegraphics[width=0.45\textwidth]{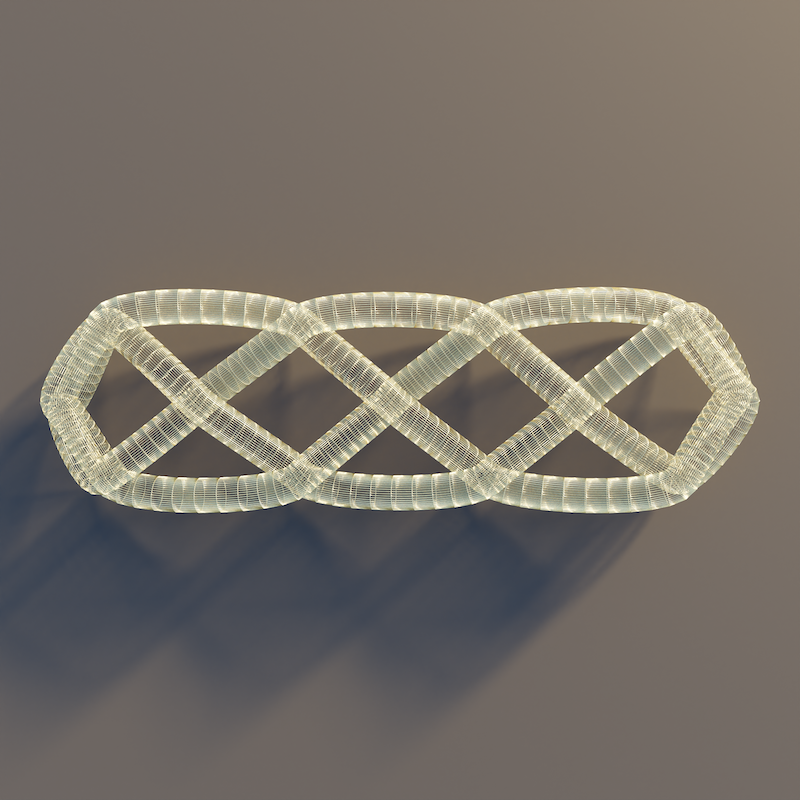}
    \includegraphics[width=0.45\textwidth]{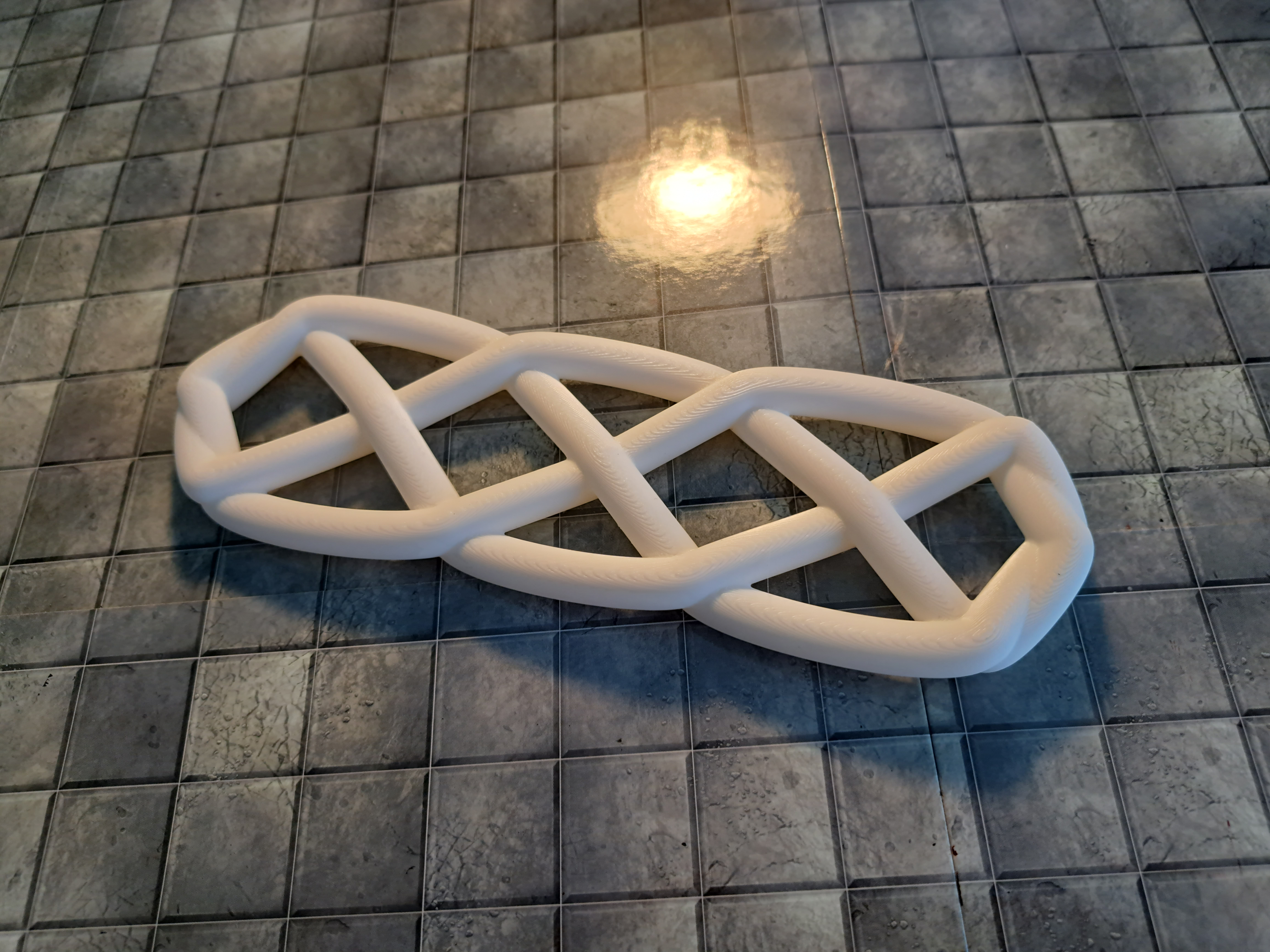}
    \includegraphics[width=0.45\textwidth]{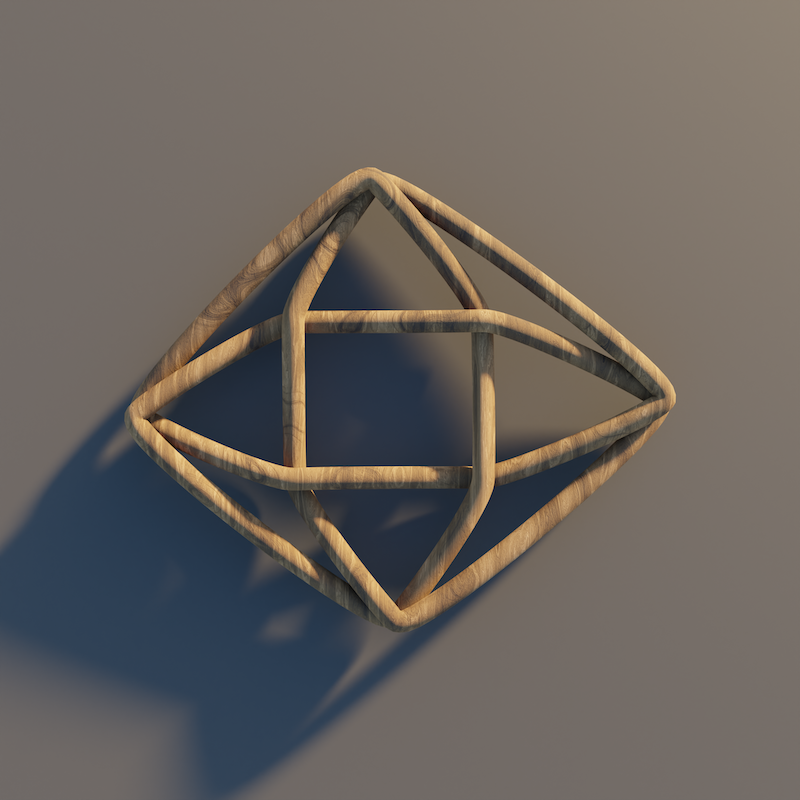}
    \includegraphics[width=0.45\textwidth]{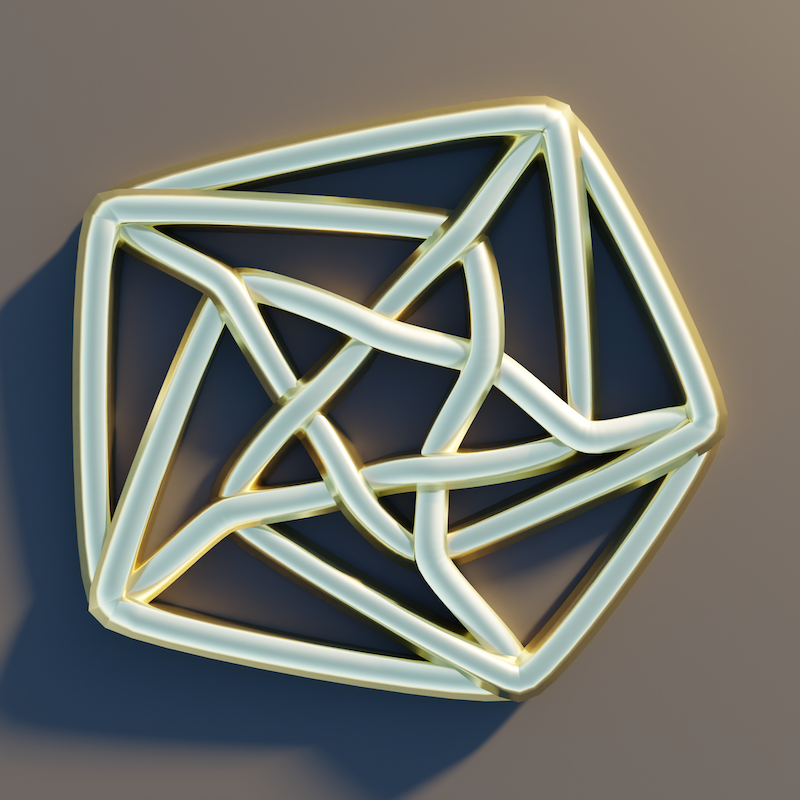}
    \includegraphics[width=0.45\textwidth]{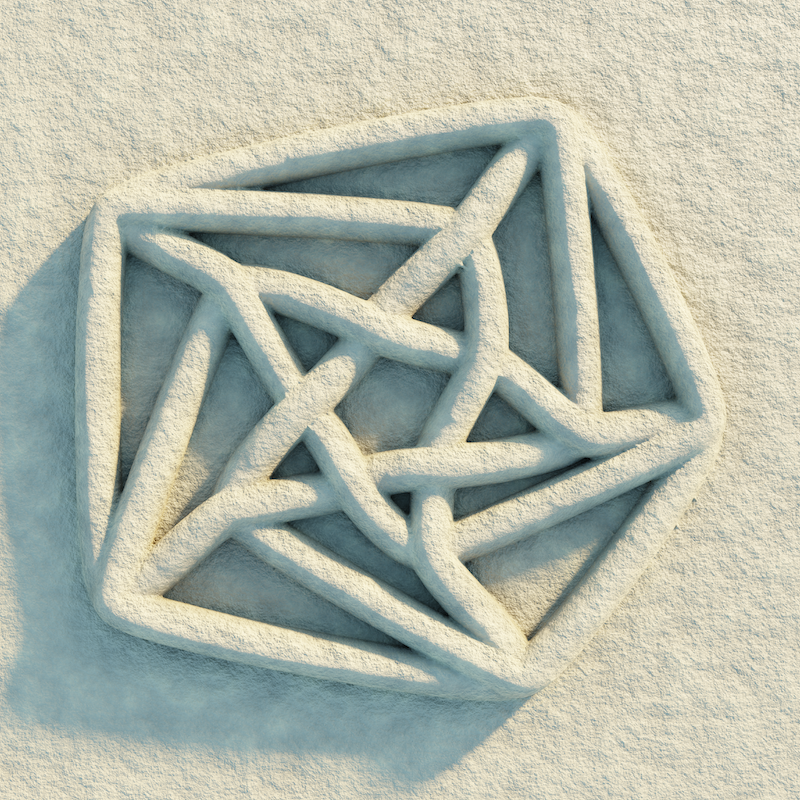}
    \caption{Examples of Celtic knot renderings in different media including a 3D printed version (mid left).}
    \label{fi:gallery}
\end{figure}

\section{\texttt{CelticGraph} implementation as a \texttt{Vanted} add-on and rendering}

\texttt{CelticGraph} has been implemented as an add-on of \texttt{Vanted}, a tool for interactive visualisation and analysis of networks. Figure~\ref{fi:teaserV2} shows an example workflow; the first step is implemented as \texttt{Vanted}~\cite{JunkerKS06b} add-on, the second is done by Blender~\cite{Blender}.

\texttt{Vanted} allows a user to load or create 4-regular graphs, either by importing from files (e.\,g.~a .gml file), by selecting from examples, or by creating a new graph by hand. The individual vertices of the graph are then mapped into the data structure of a cross, containing position and the rotation and control points of the to-be-generated B\'ezier curves. The graph is  translated into a knot (link) using the methods for optimal cross rotation and arm length computation  described in the previous sections. Vertex positions can be interactively changed, either by interacting with the underlying graph, or by interacting directly with the visualisation of the knot. Once a visualisation satisfies the expectation of the user, the B\'ezier curves can be exported for further use in Blender.

We implemented a Python script and a geometry node tree in Blender which allows importing the information into Blender and rendering the knot (links), either using a set of predefined media or interactively; the script can also run as batch process with selected parameters and media. Figure~\ref{fi:gallery} shows examples of Celtic knot renderings in different media such as in metal, in stone, with additional decoration and so on;  knots can be also printed in 3D. More examples can be found in the gallery of our web page \url{http://celticknots.online} which also provides the \texttt{Vanted} add-on, Blender file and a short manual. 





\subsection*{Acknowledgements}
Partly funded by the Deutsche Forschungsgemeinschaft (DFG, German Research Foundation) – Project-ID 251654672 – TRR 161.

\clearpage

\bibliographystyle{splncs04}
\bibliography{bib}

\begin{thebibliography}{10}
\providecommand{\url}[1]{\texttt{#1}}
\providecommand{\urlprefix}{URL }
\providecommand{\doi}[1]{https://doi.org/#1}

\bibitem{AbelloGansner}
Abello, J., Gansner, E.: Short and smooth polygonal paths. In: Lucchesi, C.L.,
  Moura, A.V. (eds.) LATIN'98: Theoretical Informatics. pp. 151--162. Springer
  (1998)

\bibitem{Bain73}
Bain, G.: Celtic art : the methods of construction. Dover Publications (1973)

\bibitem{Bain86}
Bain, I.: Celtic knotwork. Sterling Publishing Co. (1986)

\bibitem{Blender}
{Blender Online Community}: Blender -- a 3D modelling and rendering package.
  Blender Foundation (2018), \url{http://www.blender.org}

\bibitem{Brandes2000Impro-5663}
Brandes, U., Shubina, G., Tamassia, R.: Improving angular resolution in
  visualizations of geographic networks. In: de~Leeuw, W.C., van Liere, R.
  (eds.) Data Visualization 2000. pp. 23--32. Eurographics, Springer (2000)

\bibitem{BrandesWagner}
Brandes, U., Wagner, D.: Using graph layout to visualize train interconnection
  data. Journal of Graph Algorithms and Applications  \textbf{4},  135--155
  (2000)

\bibitem{Clark1827}
Clark, H.: A short and easy introduction to heraldry. London, Printed for
  H.~Washbourn (1827)

\bibitem{DBLP:conf/gd/DevroyeK95}
Devroye, L., Kruszewski, P.: The botanical beauty of random binary trees. In:
  Brandenburg, F. (ed.) Graph Drawing 1995. LNCS, vol.~1027, pp. 166--177.
  Springer (1995)

\bibitem{DiBattistaETT99}
{Di Battista}, G., Eades, P., Tamassia, R., Tollis, I.G.: Graph Drawing:
  Algorithms for the Visualization of Graphs. Prentice Hall (1999)

\bibitem{DuncanEGKN10a}
Duncan, C.A., Eppstein, D., Goodrich, M.T., Kobourov, S.G., N{\"{o}}llenburg,
  M.: Lombardi drawings of graphs. In: Brandes, U., Cornelsen, S. (eds.) Graph
  Drawing 2010. LNCS, vol.~6502, pp. 195--207. Springer (2010)

\bibitem{GDotPoster}
Eades, P., Hong, S.H., McGrane, M., Meidiana, A.: {GDot-i}: Interactive system
  for dot paintings of graphs. In: Krone, M., Lenti, S., Schmidt, J. (eds.)
  EuroVis 2022 - Posters. The Eurographics Association (2022)

\bibitem{osti}
Even-Zohar, C., Hass, J., Linial, N., Nowik, T.: Universal knot diagrams.
  Journal of Knot Theory and Its Ramifications  \textbf{28}(07) (2019)

\bibitem{Ferguson}
Ferguson, R.: An easier derivation of the curvature formula from first
  principles. Australian Senior Mathematics Journal  \textbf{32},  16--22
  (2018)

\bibitem{Fink}
Fink, M., Haverkort, H., N{\"o}llenburg, M., Roberts, M., Schuhmann, J., Wolff,
  A.: Drawing metro maps using bezier curves. In: Graph Drawing 2012. LNCS,
  vol.~7704, pp. 463--474. Springer (2013)

\bibitem{Finkel}
Finkel, B., Tamassia, R.: Curvilinear graph drawing using the force-directed
  method. In: Graph Drawing 2004. LNCS, vol.~3383, pp. 448--453. Springer
  (2004)

\bibitem{DBLP:books/daglib/0067131}
Foley, J.D., van Dam, A., Feiner, S., Hughes, J.F.: Computer graphics -
  principles and practice, 2nd Edition. Addison-Wesley (1990)

\bibitem{FN02-138}
Fries-Knoblach, J.: Die Kelten, pp. 138--142. Kohlhammer-Urban (2012)

\bibitem{Glassner99}
Glassner, A.: Celtic knotwork, part 1. IEEE Computer Graphics and Applications
  \textbf{19}(5),  78--84 (1999)

\bibitem{GoodrichW98}
Goodrich, M.T., Wagner, C.G.: A framework for drawing planar graphs with curves
  and polylines. In: Whitesides, S.H. (ed.) Graph Drawing 1998. LNCS,
  vol.~1547, pp. 153--166. Springer (1998)

\bibitem{GraphViz}
Graphviz. \url{https://graphviz.org/}

\bibitem{GDot}
Hong, S.H., Eades, P., Torkel, M.: Gdot: Drawing graphs with dots and circles.
  In: 2021 IEEE 14th Pacific Visualization Symposium (PacificVis). pp. 156--165
  (2021). \doi{10.1109/PacificVis52677.2021.00029}

\bibitem{bobbinLace}
Irvine, V., Biedl, T., Kaplan, C.S.: Quasiperiodic bobbin lace patterns.
  Journal of Mathematics and the Arts  \textbf{14}(3),  177--198 (2020)

\bibitem{james}
James, I.M.: History of topology. North Holland (1999)

\bibitem{JunkerKS06b}
Junker, B.H., Klukas, C., Schreiber, F.: {VANTED}: A system for advanced data
  analysis and visualization in the context of biological networks. BMC
  Bioinformatics  \textbf{7},  109.1--13 (2006)

\bibitem{LD18}
Kindermann, P., Kobourov, S., L{\"o}ffler, M., N{\"o}llenburg, M., Schulz, A.,
  Vogtenhuber, B.: Lombardi drawings of knots and links. In: Frati, F., Ma,
  K.L. (eds.) Graph Drawing and Network Visualization. pp. 113--126. LNCS,
  Springer (2018)

\bibitem{Klempien-HinrichsT10}
Klempien{-}Hinrichs, R., von Totth, C.: Generation of celtic key patterns with
  tree-based collage grammars. In: Manipulation of graphs, algebras and
  pictures. Electron. Commun. Eur. Assoc. Softw. Sci. Technol., vol.~26, pp.
  205--222 (2010)

\bibitem{koffka}
Koffka, K.: Principles of Gestalt psychology. Harcort Brace and Co. (1935)

\bibitem{Ma15-158}
Maier, B.: Die Geschichte, Kultur, Sprache, pp. 158--159. utb (2015)

\bibitem{Ma15-159}
Maier, B.: Die Geschichte, Kultur, Sprache, p.~159. utb (2015)

\bibitem{bookOfKells}
{Monks of St. Columba's order of Iona}: The Book of Kells. at the Old Library
  in Trinity College Dublin (9th century)

\bibitem{Murasugi93}
Murasugi, K.: Knot theory and its applications (1993)

\bibitem{SPQR}
Mutzel, P.: The {SPQR}-tree data structure in graph drawing. In: Baeten,
  J.C.M., Lenstra, J.K., Parrow, J., Woeginger, G.J. (eds.) Automata, Languages
  and Programming. pp. 34--46. Springer (2003)

\bibitem{M12}
Müller, F.: Die Kunst der Kelten. C.H. Beck Wissen (2012)

\bibitem{M12-119}
Müller, F.: Die Kunst der Kelten, p.~119. C.H. Beck Wissen (2012)

\bibitem{DBLP:books/ws/NishizekiR04}
Nishizeki, T., Rahman, M.S.: Planar graph drawing, Lecture Notes Series on
  Computing, vol.~12. World Scientific (2004)

\bibitem{RB01-197}
Rieckhoff, S., Biel, J.: Die Kelten in Deutschland, pp. 197--206. Konrad Theiss
  Verlag (2001)

\bibitem{Vizaj}
Rolland, T., {De Vico Fallani}, F.: Vizaj—a free online interactive software
  for visualizing spatial networks. PLoS ONE  \textbf{18}(3),  e0282181 (2023)

\bibitem{Roeb12-460}
Röber, R.: {Exhibition catalogue Archäologisches Landesmuseum Stuttgart: Die
  Welt der Kelten. Zentren der Macht -- Kostbarkeiten der Kunst}, pp. 460--521.
  Jan Thorbecke Verlag (2012)

\bibitem{Roeb12-512}
Röber, R.: {Exhibition catalogue Archäologisches Landesmuseum Stuttgart: Die
  Welt der Kelten. Zentren der Macht -- Kostbarkeiten der Kunst}, pp. 512--515.
  Jan Thorbecke Verlag (2012)

\bibitem{whitney}
Whitney, H.: 2-isomorphic graphs. American Journal of Mathematics  \textbf{55},
   245–54 (1933)

\bibitem{KaiXu}
Xu, K., Rooney, C., Passmore, P., Ham, D.H., Nguyen, P.H.: A user study on
  curved edges in graph visualization. IEEE Transactions on Visualization and
  Computer Graphics  \textbf{18}(12),  2449--2456 (2012)

\bibitem{yWorks}
yworks. \url{https://yWorks.org/}

\end{thebibliography}

\FloatBarrier
\newpage

\appendix
\section{The cardinality of a threaded circuit partition is fixed}

\label{appendix:cardinality}

In this appendix we provide a proof for the claim made above in Section~\ref{sec:c};
namely that the \emph{number} of threaded circuits in any threaded circuit partition
of a 4-regular plane graph $G$ is fixed by the combinatoric structure of $G$, and cannot
be changed by taking a different plane embedding of $G$.

Since there is no choice of plane embedding for 3-connected graphs~\cite{whitney}
we will be concerned only with 1- and 2-connected graphs. We will not consider disconnected graphs;
in that case each connected component should be treated separately. Clearly a threaded circuit cannot
span multiple connected components, so the cardinality of a threaded circuit partition of a disconnected
graph is exactly the sum of the cardinalities of the threaded circuit partitions of its components.

Suppose that $G$ is a 4-regular planar graph, and $c$ is a cutpoint separating components $G_1$ and $G_2$ in $G$. Note that $c$ has degree 2 in both $G_1$ and $G_2$, because a 4-regular graph is bridgeless.
Now suppose that the two edges incident with $c$ in $G_1$ are $e_1$ and $e_2$. By the handshake lemma $e_1$ and $e_2$ belong to the same threaded circuit in any threaded circuit of an embedding of $G$. Thus, noting that $c$ is the midpoint of two threads, all four edges incident with $c$ belong to the same threaded circuit. Therefore changing the order of 2-connected components around a cutpoint does not change the number of threaded circuits; in fact we can deduce the following theorem.

\begin{theorem}
Suppose that $G$ is a 4-regular planar graph, and $c$ is a cutpoint separating components $G_1$ and $G_2$ in $G$; denote by $G'_i$ the 4-regular planar graph formed by adding a self-loop to $c$ in $G_i$, for $i=1,2$. Suppose that $\hat{G}$ is a planar embedding of $G$, and $\hat{G'_1}$ and $\hat{G'_2}$ are plane subgraphs of $G$ corresponding to $G'_1$ and $G'_2$ respectively. Suppose that $\hat{G},\hat{G_1}$ and $\hat{G_2}$ have threaded circuit partitions $C, C_1$, and $C_2$ respectively. Then $|C| = |C_1| + |C_2| - 1$.
\end{theorem}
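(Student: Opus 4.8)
The plan is to follow the single threaded circuit of $\hat G$ that runs through the cutpoint $c$, and to show that it is exactly this circuit — and only it — that is responsible for the ``$-1$'' in the formula.

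First I would pin down the combinatorics at $c$. Since $G$ is bridgeless, $c$ has degree $2$ in each of $G_1,G_2$; write $e_1,e_2$ for the two edges of $G_1$ at $c$ and $e_3,e_4$ for the two of $G_2$. Because the blocks of a plane graph incident to a cutvertex are cyclically consecutive around that vertex, the rotation at $c$ in $\hat G$ is, up to relabelling, $(e_1,e_3,e_4,e_2)$; hence the two threads with midpoint $c$ are $\{e_1,e_4\}$ and $\{e_3,e_2\}$, and each of them joins an edge of $G_1$ to an edge of $G_2$. Adding the self-loop $L_i$ to $c$ in $G_i$ means drawing it in a face of $\hat{G_i}$ incident to $c$ (the natural choice being the face in which $G_{3-i}$ sat), so the rotation at $c$ in $\hat{G'_1}$ has the form $(e_1,\ell,\ell',e_2)$, where $\ell,\ell'$ are the two ends of $L_1$, and the two threads at $c$ are $\{e_1,\ell'\}$ and $\{\ell,e_2\}$; symmetrically for $\hat{G'_2}$. (A different admissible placement of $L_i$ only reverses the orientation of $L_i$ in the circuit and does not affect anything below.)

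Next I would isolate the special circuit of $C$. Each thread is the transition of exactly one circuit in a threaded circuit partition, so the two threads at $c$ are used by circuits of $C$; and a circuit using only one thread at $c$ while containing edges of both $G_1$ and $G_2$ would have to re-enter $G_1$ (resp.\ $G_2$) through the only common vertex $c$, forcing it to use a second thread at $c$ — a contradiction. Hence there is a unique $\sigma\in C$ through $c$, it uses both threads at $c$, and, since $G_1$ and $G_2$ meet only at $c$, cyclically $\sigma=\pi_1\pi_2$, where $\pi_i$ is the portion of $\sigma$ inside $G_i$, leaving $c$ along one edge of $G_i$ and returning along the other. Every other circuit of $C$ lies entirely inside $G_1$ or entirely inside $G_2$; letting $C'_i$ collect those inside $G_i$ we get $|C| = 1 + |C'_1| + |C'_2|$, and the edge set of $G_i$ is partitioned by $\pi_i$ together with the circuits of $C'_i$.

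Finally I would identify $C_1$ and $C_2$. The ``next edge'' relation at every vertex of $G_i$ other than $c$ is identical in $\hat G$ and in $\hat{G'_i}$, so every circuit of $C'_i$ is still a threaded circuit of $\hat{G'_i}$. At $c$, the threads computed above show that following edges from $e_1$ in $\hat{G'_1}$ runs $e_1 \to \ell' \to (\text{traverse }L_1) \to \ell \to e_2$ and then exactly along $\pi_1$ back to $e_1$; that is, the loop $L_1$ is spliced into $\pi_1$, and one checks directly from the rotation that both turns at $c$ are threads, so the result $\sigma_1$ is a single threaded circuit. Thus $\{\sigma_1\}\cup C'_1$ partitions $E(G'_1)$ into threaded circuits, and by the uniqueness of the threaded circuit partition (Theorem~\ref{th:threaded}) it equals $C_1$; likewise $C_2=\{\sigma_2\}\cup C'_2$. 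Therefore $|C_1| = 1+|C'_1|$ and $|C_2| = 1+|C'_2|$, whence $|C_1|+|C_2|-1 = 1+|C'_1|+|C'_2| = |C|$. The step I expect to be the main obstacle is making the local picture at $c$ precise — justifying the rotation at $c$ from the block structure and checking that the loop placement makes the threading at $c$ splice $L_i$ into $\pi_i$ as claimed; after that, everything reduces to bookkeeping once Theorem~\ref{th:threaded} is invoked.
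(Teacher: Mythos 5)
Your proof is correct and follows essentially the same route as the paper's (brief) argument: establish that all four edges at the cutpoint $c$ lie on a single threaded circuit of $C$, then observe that this circuit splits into the two loop-spliced circuits of $C_1$ and $C_2$ while every other circuit survives unchanged, giving the count. You simply make explicit what the paper leaves implicit --- the planarity fact that the edges of $G_1$ and $G_2$ are consecutive in the rotation at $c$ (so the threads at $c$ cross between the components), and the appeal to uniqueness of the threaded circuit partition to identify the spliced circuits with $C_1$ and $C_2$.
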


In contrast, ``mirroring'' about a \emph{separation pair} can change the threaded circuits; as demonstrated in
Fig.~\ref{fi:differentLengthsV2}. However, we will now show that the cardinality of a threaded circuit partition is a \emph{graph property}, not an embedding property; that is, changing the embedding does not change the \emph{number} of threaded circuits in a threaded circuit partition.

\begin{theorem}
    Suppose that $G$ is a 2-connected 4-regular planar graph, and $C$ and $C'$ are threaded circuit partitions of two planar embeddings of $G$.
    Then $|C| = |C'|$.
\end{theorem}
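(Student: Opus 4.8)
The plan is to reduce the statement to the effect of a single \emph{Whitney flip} and then to settle that case by a short relabelling argument rather than by a cycle count. First I would recall the classical fact (see, e.g., \cite{whitney}) that any two planar embeddings of a $2$-connected planar graph are related by a sequence of reflections of a ``bridge'' across a separation pair (a rearrangement of three or more bridges around a separation pair is itself a short composition of such single-bridge flips, and a global reflection of the whole embedding reverses every rotation, which preserves every thread since reversing a cyclic order of four elements fixes the two pairs of opposite elements). Hence it suffices to show that flipping one bridge $H$ across a separation pair $\{u,v\}$ (with $\bar H$ denoting the union of the remaining bridges) does not change the number of threaded circuits.

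Next I would localise the change. A flip of $H$ reverses the rotation at $u$ among the edges entering $H$ (which form a contiguous block), reverses the analogous block at $v$, reverses the rotations at the interior vertices of $H$, and leaves every rotation in $\bar H$ untouched. Since reversing a cyclic order of four edge-ends preserves the two opposite pairs, no thread changes at any interior vertex of $H$ or of $\bar H$. Let $k_u$ be the number of edges at $u$ entering $H$; reversing a block of size $1$ or $3$ again fixes the two opposite pairs, so the threads at $u$ change only if $k_u=2$, and likewise at $v$. A handshake count in the subgraph induced by $H\cup\{u,v\}$ (whose interior vertices have degree $4$, by $4$-regularity) forces $k_u\equiv k_v \pmod 2$. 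Therefore either $k_u,k_v$ are both odd, in which case the flip changes \emph{no} thread at all and the threaded circuit partition is literally unchanged as a partition of $E(G)$, or $k_u=k_v=2$.

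For the case $k_u=k_v=2$, I would name the two $H$-edges at $u$ as $b_1,b_2$, the two $\bar H$-edges at $u$ as $c_1,c_2$, and similarly $d_1,d_2\in H$ and $f_1,f_2\in \bar H$ at $v$. The threaded circuits arise by concatenating: the arcs of the threaded structure inside $H$ (inducing a perfect matching $\mu_H$ on $\{b_1,b_2,d_1,d_2\}$, together with some number of circuits lying entirely inside $H$); the analogous data $\mu_{\bar H}$ on $\{c_1,c_2,f_1,f_2\}$ for $\bar H$; the two threads at $u$ (which match $\{b_1,b_2\}$ to $\{c_1,c_2\}$ in one of two ways); and the two threads at $v$ (matching $\{d_1,d_2\}$ to $\{f_1,f_2\}$ in one of two ways). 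The flip fixes $\mu_H$, $\mu_{\bar H}$ and the two interior circuit counts (it is an orientation-reversing homeomorphism of $H$ that preserves interior threads, and the identity on $\bar H$), while it toggles \emph{both} the matching at $u$ and the matching at $v$. The key observation is then that the relabelling $b_1\leftrightarrow b_2,\ d_1\leftrightarrow d_2$ is a bijection of the eight ports, hence an isomorphism of the $2$-regular ``concatenation graph'' that preserves the number of cycles; it also toggles both the $u$-matching and the $v$-matching, and it fixes $\mu_H$ (every perfect matching of a $4$-set is invariant under $(b_1b_2)(d_1d_2)$) and $\mu_{\bar H}$. So the post-flip configuration is isomorphic to the pre-flip one, and $|C|$ is unchanged.

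The main obstacle is getting this reduction genuinely clean: one must verify that a flip toggles the threads at \emph{both} endpoints of the separation pair (this is exactly what makes the two local changes cancel) and that the ``$2$--$2$ split'' is the only embedding-sensitive configuration. Once these two points are nailed down, the relabelling trick turns the final step into a one-line verification, so I would not expect it to require the $\pm1$ ``transition-change'' lemma for $4$-valent vertices, although that lemma gives a useful consistency check (the total change in $|C|$ is a sum of two $\pm1$ terms, hence even, consistent with the answer $0$).
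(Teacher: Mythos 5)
Your proof is correct, and it follows the same skeleton as the paper's: reduce to a single flip (the paper's ``pivot'') of one side of a separation pair, observe that the interior of the flipped component is irrelevant to the threads, use the handshake lemma to constrain the attachment degrees at the separation vertices, and then check that the flip cannot change the number of circuits. The difference lies in that last step, which is exactly where the paper is thinnest: the paper classifies the ``interface'' of the flipped component into four types (the three perfect matchings of four cut edges plus the one matching of two) and simply asserts that pivoting each type preserves the cardinality, whereas you actually prove it, and by a slightly different mechanism. You first show the threads at the separation vertices are untouched unless both attachment degrees equal $2$ (reversing a block of size $1$ or $3$ inside a cyclic order of four fixes the two opposite pairs), so in the odd case the partition is literally unchanged; in the $2$--$2$ case you observe that the flip toggles the transition matchings at \emph{both} $u$ and $v$ while fixing $\mu_H$, $\mu_{\bar{H}}$ and the interior circuits, and the relabelling $(b_1\,b_2)(d_1\,d_2)$ is then an isomorphism of the $2$-regular concatenation structure carrying the pre-flip data to the post-flip data. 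This relabelling trick is a clean substitute for case-checking the paper's four interface types, and it makes explicit the cancellation (a toggle at one endpoint alone could change the count; the simultaneous toggle at the other endpoint undoes it) that the paper leaves implicit.

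One small repair: a reordering of three or more bridges around a separation pair is not a composition of \emph{single-bridge} flips, since flipping a single bridge keeps it in place and only reflects it internally; you need flips of a contiguous \emph{union} of bridges (transposing two adjacent bridges is a flip of their union followed by a flip of each one). This costs you nothing, because your local argument never uses that $H$ is a single bridge --- only that its attachment edges at $u$ and at $v$ form contiguous blocks and that the flip reverses those blocks together with the interior rotations --- so it applies verbatim with $H$ taken to be the flipped union. You might also note that $2$-connectivity excludes $k_u\in\{0,4\}$, so the cases you treat are the only ones.
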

\begin{proof}
    For a 2-connected planar graph, all planar embeddings can be found by ``pivoting'' (re-ordering and ``mirroring'') 3-connected components around separation pairs. We show that such a pivoting does not change the cardinality of a threaded circuit partition.
    
    Firstly note that pivoting does not change the internal structure of any component; it only changes
     the ``interface'', that is, the way in which it connects to the rest of the graph. Thus the only threaded circuits that are possibly changed are those that pass through one or both vertices in the separation pair.

    Now suppose that vertices $c$ and $d$ form a separation pair that separates $G_1$ from $G_2$. The handshake lemma implies that either both $c$ and $d$ have degree 2 in $G_1$, or both have odd degree in $G_1$. One can deduce that, since the graph is 4-regular and planar, there are only four possible ``interfaces'', enumerated in Fig.~\ref{img:components}.

    To complete the proof, note that pivoting each type of subgraph does not change the cardinality of the threaded circuit partition.

\begin{figure}
  \centering
  \includegraphics[width=0.6\textwidth]{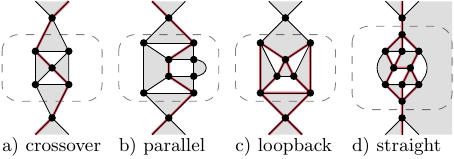}
  \caption{\emph{Examples of the four types of subgraphs separated by a separation pair. The type of a subgraph
   is determined by a perfect matching on the edges in its cut; there are
  three ways to match four edges, and one way to match two edges.}}
  \label{img:components}
\end{figure}
\end{proof}

We finally note that this observation provides an extension to the simple linear-time algorithm in Section~\ref{sec:c} that tests whether a graph has a threaded Euler circuit (in any embedding). Using a SPQR tree~\cite{SPQR}, one can decompose the graph into pivotable components and solve the following optimisation problems:
\begin{itemize}
    \item Given a 2-connected 4-regular planar graph, find a planar embedding in which the length of the longest threaded circuit is maximised.
    \item Given a 2-connected 4-regular planar graph, find a planar embedding in which the average of the lengths of threaded circuits in the threaded circuit partition is maximised.
\end{itemize}

\end{document}